\title{Deviate or Not: Learning Coalition Structures \\ with Multiple-bit Observations in Games}
\author{
    Yixuan Even Xu \thanks{Carnegie Mellon University. Email: \texttt{yixuanx@cs.cmu.edu}}
    \and  
    Zhe Feng \thanks{Google Research. Email: \texttt{zhef@google.com}}
    \and
    Fei Fang \thanks{Carnegie Mellon University. Email: \texttt{feif@cs.cmu.edu}}
}
\date{}
\newtheorem{theorem}{Theorem}[section]
\newtheorem{definition}{Definition}[section]
\newtheorem{lemma}{Lemma}[section]
\newtheorem{corollary}[lemma]{Corollary}
\newcommand{\G}{\mathcal G}
\renewcommand{\S}{\mathcal S}
\renewcommand{\O}{\mathcal O}
\newcommand{\BR}{\mathrm{BR}}
\renewcommand{\P}{\mathcal P}
\newcommand{\B}{\mathcal B}
\newcommand{\A}{\mathcal A}
\renewcommand{\v}{\mathbf v}
\renewcommand{\r}{\mathbf r}
\newcommand{\ind}[1]{\mathbb{I}\hspace{-0.1em}\left[\vphantom{\sum}#1\right]}
\begin{document}

\maketitle

\begin{abstract}
	We consider the Coalition Structure Learning (CSL) problem in multi-agent systems, motivated by the existence of coalitions in many real-world systems, e.g., trading platforms and auction systems. In this problem, there is a hidden coalition structure within a set of $n$ agents, which affects the behavior of the agents in games. Our goal is to actively design a sequence of games for the agents to play, such that observations in these games can be used to learn the hidden coalition structure. In particular, we consider the setting where in each round, we design and present a game together with a strategy profile to the agents, and receive a multiple-bit observation -- for each agent, we observe whether or not they would like to deviate from the specified strategy. 
We show that we can learn the coalition structure in $O(\log n)$ rounds if we are allowed to design any normal-form game, matching the information-theoretical lower bound. For practicality, we extend the result to settings where we can only choose games of a specific format, and design algorithms to learn the coalition structure in these settings. For most settings, our complexity matches the theoretical lower bound up to a constant factor.
\end{abstract}

\section{Introduction}
\label{sec:introduction}

Coalitions and collusive behavior are common in the real world. In auction systems, bidders may form coalitions to coordinate bids and exploit mechanisms for better odds of winning~\cite{milgrom2004putting}. On ridesharing platforms like Uber and Lyft, some drivers might disconnect simultaneously to create an artificial shortage, triggering a price surge they can later benefit from~\cite{hamilton2019uber,Sweeney2019manipulate,Dowling2023drive}. While illegal collusion, like price fixing, is regulated by agencies such as the SEC in the United States, it remains challenging to eliminate, as seen in scandals like the LIBOR Scandal~\cite{Libor}. As AI systems become more advanced and widely used in domains like stock trading and autonomous driving, regulating these AI agents presents a growing challenge as seen by regulators and researchers~\cite{ethicsofAI,qu2023adversarial}.

In response to such concerns, the problem of coalition structure learning (CSL)~\cite{xu2023learning} has recently gained attention. In CSL, a platform aims to figure out which agents are in a coalition by \textit{designing a small number of games} for the agents to play and \textit{make observations on the outcomes} of these games. Notably, more than one coalition may exist simultaneously. In practice, the platform could represent a regulator, like an auction or ridesharing platform, aiming to detect collusion or design better mechanisms.

However, the existing study on CSL relies on a strong assumption that the platform can only observe whether a specified strategy profile is a Nash equilibrium (NE) in each round. With this single-bit observation oracle and other assumptions, optimal algorithms require $\Theta(n \log n)$ rounds of games to learn the coalition structure, where $n$ is the number of agents.
Such an observation oracle is too restrictive for two reasons. First, for real-world systems with a large number of agents, $\Theta(n \log n)$ rounds is still prohibitively large. Second, it may be possible for the platform to know which agents would deviate from the specified strategy profile. For example, a ridesharing platform could observe which drivers disconnect from the platform given a pricing scheme.

In this paper, we aim to address this limitation and consider a \textit{multiple-bit observation oracle}: the platform can observe an $n$-dimensional binary vector in each round, where the $i$-th bit indicates whether or not agent $i$ would like to deviate from the specified strategy profile. 
This oracle allows the platform to gain $n$ bits of information in each round, which is significantly more informative than the single-bit observation oracle, allowing the platform to learn the coalition structure with much fewer rounds. However, from a technical perspective, switching from the single-bit observation oracle to the multiple-bit observation oracle fundamentally changes the problem. With the single-bit observation, the platform can choose the game to use to get the next bit of observation, which means it is easy to ensure that each bit of observation brings in information. 
However, with multi-bit observation, the platform will get $n$-bit information in a batch, and it is often inevitable that these $n$-bits contain repetitive or highly correlated information. Therefore, the key challenge in algorithm design is how to design the games so that the $n$ bits of observation are more de-correlated and contain more information.

\if 0
{
	\color{blue}
	bad coalition exists in practice

	- auction systems

	- ridesharing

	- security?

	- trading?

	knowing the coalitions is beneficial

	- various benefits \& coalition unknown

	- auction platform example

	introduce the csl problem

	- mention the use of games

	the old paper

	- model 

	- why game formats 

	- results

	- weakness: single-bit oracle results in large number of rounds

	What we aim to do and why this is challenging

	- want to have sublinear performance

	- n-bit oracle

	- need to utilize most of the information

	our results

	- notations \& results (hardness and theorems)

	- briefly mention approach

	- comparison with old paper
}
\fi

\subsection{Our Results}
\label{subsec:our_results}

\begin{table*}[t]
	\centering
	\begin{tabular}{|c|c|c|c|c|}
		\hline
		Type of Games & Lower Bound & Complexity & Section \\
		\hline
        Normal-Form & $\log_2 n-O(\log\log n)$ & $\log_2 n + 2$ & \cref{sec:normal_form_games} \\
        \hline
        Congestion & $\log_2 n-O(\log\log n)$ & $\log_2 n + 2$ & \cref{sec:congestion_games} \\
		\hline
        Graphical & $\max(\log_2 n,n/d)-O(\log\log n)$ & $2n/d+2\log_2 d + 1$ & \cref{sec:graphical_games} \\
        \hline
        Auctions & $\log_2 n-O(\log\log n)$ & $(1+\log_2n)(1+c)+1$ & \cref{sec:auctions} \\
        \hline
	\end{tabular}
	\caption{Summary of results.}
	\label{table:summary}
\end{table*}

Depending on the specific scenario of application, sometimes external constraints may limit the type of games that can be designed. For example, in an auction platform, the algorithm may only use auctions. Therefore, we study different settings of the type of games the algorithm may design. We summarize our results for different settings in \cref{table:summary}.

\textbf{Information-theoretical lower bound.} We first show that any algorithm that learns the coalition structure requires at least $\log_2 n - O(\log\log n)$ rounds regardless of the type of games used, where $n$ is the number of agents. 

\textbf{Normal-form games.} Normal-form games are a general class of games. We consider the case when the platform can design any normal-form game and there are no external constraints on the type of games that can be designed. We show an algorithm, \textit{simultaneous binary search}, that can learn the coalition structure in $\log_2 n + 2$ rounds using \textit{directed prisoner's dilemmas}, a type of designed games for agents to play to elicit coalition structures. The number of rounds matches the lower bound almost exactly. 

\textbf{Congestion games.} Congestion games \cite{rosenthal1973class} are a well-studied type of games that model the allocation of resources, and has demonstrated its importance in real-world transportation system. For CSL in congestion games, we show that with a different type of games -- \textit{directed Braess's paradoxes} -- the simultaneous binary search algorithm can be adapted to learn the coalition structure with congestion games. The number of rounds required is also $\log_2 n + 2$.

\textbf{Graphical games.} Graphical games \cite{kearns2001graphical} are another thoroughly researched class of games. Investigating CSL with graphical games can be useful in contexts like social networks. We consider degree-$d$ graphical games, i.e., each agent has at most $d$ neighbors in the underlying graph. In the negative direction, we show that any algorithm for the problem requires at least $\lceil\frac{n-1}{d}\rceil$ rounds. 
In the positive direction, 
we design an algorithm that couples simultaneous binary search with the idea of \textit{block decomposition}, i.e., to divide the agents into blocks of size $\lfloor\frac{d}{2}\rfloor$.
See the construction of \cref{alg:graphical,fig:decomposition} for details of this technique. Our algorithm learns the coalition structure in $2n/d+2\log_2 d + 1$ rounds. Given the two lower bounds, this algorithm is optimal up to a constant factor.

\textbf{Auctions.} Finally, we study CSL with auctions, in particular, \emph{second-price auctions with personalized reserves}, which are widely adopted in online advertising systems. The format of auctions is much more restrictive and challenging than the previous types of games since there is only one winner in an auction, and it is difficult to elicit information from the other players.
Our algorithm in this case learns the coalition structure in $(1+\log_2n)(1+c)+1$ rounds, where $c$ is the size of the largest coalition. When the coalitions are small, the number of rounds required is close to the lower bound. 

\subsection{Related Work}
\label{subsec:other_related_work}

Our work lies in the general research direction of learning in games~\cite{LearningGameBook}. In particular, our work is closely related to the line of research on Inverse Game Theory, e.g.,~\cite{WZB11, KS15, LFK18, LCM09}, where these previous papers aim to infer the parameters and underlying payoffs of a game by observing the behavior of players. Our work adopts this perspective but seeks to understand the \emph{coalition structure} in games, which is not considered in prior works. 
By strategically and adaptively selecting defender strategies and observing the attacker's best responses,~\cite{balcancommitment15, HaghtalabSecurityGame16,wu2022inverse} show that one can uncover the underlying utility functions driving attacker behavior in Stackelberg security games. 
Our work learns in a similar adaptive manner but focuses on the coalition structure among agents.
In addition, learning coalition structure has a fundamental difference from the above existing literature since the space of all possible coalition structures is exponentially large.

The closest related work is~\cite{xu2023learning}, where they focus on learning coalition structure through a single bit feedback, i.e., the game designed in the learning process only returns one-bit information -- whether a specified strategy profile is a NE or not. In this work, we strictly generalize the model proposed by~\cite{xu2023learning} that allows us to observe multiple-bit feedback rather than just binary feedback. This flexibility strictly increases the set of possible observations and thus the difficulty of designing the games. Moreover,~\cite{pmlr-v180-bonjour22a, Mazrooei_Archibald_Bowling_2013} provide general approaches to detect \emph{one} (collusion) coalition in multi-agent games, whereas, our work aims to find the entire
coalition structure in games.

There is also a line of research on \emph{learning with revealed preference feedback}, e.g.,~\cite{Beigman06, Zadimoghaddam12, Balcan14, BlumLearnOptCommit14, Amin15, RothWatchandLearn16}, in which they target to predict the strategic agents' behavior or optimize the profit of the decision maker, through the revealed preference feedback from strategic agents. In this paper, we assume that we can observe the best response feedback of the agents for the games, where each game is carefully designed to elicit information on the coalition structure within a set of agents. Loosely related work includes no-regret learning in games~\cite{Cesa-Bianchi_Lugosi_2006,daskalakis2021near}. Indeed, if there is a cost to perform one game to learn coalition structure, our learning process achieves \emph{constant} regret given the number of rounds needed to identify all coalitions only depends on the number of agents and the structure of the underlying games.

\section{Notations and Preliminaries}
\label{sec:preliminaries}

Consider a set $N = \{1,2,\dots,n\}$ of $n$ strategic agents. A coalition $S\subseteq N$ is a nonempty subset of the agents, in which the agents can coordinate with each other in games. We assume there is a coalition structure $\S^* = \{{S_1}^*,{S_2}^*,\dots,{S_m}^*\}$ among these agents, which is a set partition of $N$, and each agent in $N$ belongs to exactly one of these mutually disjoint coalitions. We use $[i]_{\S^*}$ to denote the coalition containing agent $i$ under $\S^*$.
$\S^*$ is not public information, but agents in each coalition know the other agents in the same coalition.
In the Coalition Structure Learning (CSL) problem, our goal is to actively design a sequence of games for the agents to play, such that observations in these games can be used to learn the underlying coalition structure $\S^*$. Specifically, we interact with the agents in a sequence of rounds. Each round, we design a game $\G$ along with a strategy profile $\Sigma=(\sigma_1,\sigma_2,\dots,\sigma_n)$ in $\G$ for the agents, and we make an observation $\O(\G, \Sigma)$ about the strategy profile $\Sigma$ in $\G$. The task is to learn the underlying coalition structure $\S^*$.

\paragraph{Behavior model.} We assume that the agents are rational and strategic. When playing a game $\G$, agents in the same coalition under $\S^*$ can coordinate with each other to choose actions and will share rewards after the game. Thus, they effectively play as a joint player whose action space and utility are, respectively, the Cartesian product of the action spaces and the sum of utilities of the agents in that coalition. Since the agents are not aware of the underlying coalition structure beyond their own coalitions, the information in the game is \emph{incomplete}. We do not assume the agents' full behavior model, but focus on the agents' deviation decisions from the specified strategy profile.


\paragraph{Multiple-bit observation oracle.} In this paper, we consider the multiple-bit observation oracle as opposed to the single-bit observation oracle considered in \cite{xu2023learning}. The observation $\O(\G,\Sigma)$ is an $n$-dimensional binary vector, where the $i$-th bit $\O_i(\G,\Sigma)\in\{\mathrm{True}, \mathrm{False}\}$ indicates whether or not agent $i$ would like to \emph{deviate} from the specified strategy $\sigma_i$ when the coalition is best responding as a whole. In particular, given a game $\G$ and a strategy profile $\Sigma$, for each coalition $S\in\S^*$, agents in $S$ would first determine whether $\Sigma_S$ is a joint best response to the other agents' specified strategy $\Sigma_{-S}$ in $\G$. If it is, then $\O_i = \mathrm{False}$ for each agent $i \in S$, i.e., they would decide not to deviate. Otherwise, they would compute a possible joint best response $\BR(S,\Sigma)$ to the other agents' specified strategy $\Sigma_{-S}$ in $\G$. Then, they would compute the deviation decision $\O_i=\ind{\BR_i(S,\Sigma)\ne \sigma_i}$ for each agent $i\in S$. The observation $\O(\G,\Sigma)$ is the concatenation of the decisions of all agents in $N$. If there are multiple best responses for a coalition, we assume that the agents in that coalition would choose any of them in the worst case. We will use ``query the observation oracle'' or ``query $(\G,\Sigma)$'' to refer to the process of designing a game $\G$ and a strategy profile $\Sigma$ and receiving the observation $\O(\G,\Sigma)$.

\subsection{Lower Bound of the Number of Rounds}
\label{subsec:lower_bound}

We first show a lower bound on the number of rounds required to solve the Multiple-bit CSL problem. The lower bound is based on an information-theoretical argument similar to \cite{xu2023learning}. Note that this lower bound holds for any type of games that the algorithm may design.

\begin{theorem}
	\label{thm:lower_bound}
	Any algorithm that solves the Multiple-bit CSL problem requires at least $\log_2 n-O(\log\log n)$ rounds of interactions with the agents in the worst case.
\end{theorem}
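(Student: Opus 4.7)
The plan is to prove the lower bound by a standard information-theoretic decision-tree argument. Without loss of generality, I would view any algorithm for Multiple-bit CSL as a rooted tree whose internal nodes are labeled by queries $(\G,\Sigma)$, whose children are indexed by the possible $n$-bit observations $\O(\G,\Sigma)\in\{\mathrm{True},\mathrm{False}\}^n$, and whose leaves are labeled by the algorithm's guessed coalition structure. An algorithm that always terminates within $T$ rounds corresponds to a tree of depth at most $T$, and thus has at most $(2^n)^T = 2^{nT}$ leaves. Randomization would be handled by Yao's principle: a lower bound against deterministic algorithms on a worst-case input implies the same lower bound against randomized algorithms in expectation.

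The correctness requirement forces two distinct ground-truth coalition structures to land at two distinct leaves: since the observations are fully determined by $(\G,\Sigma,\S^*)$ (with any fixed choice of tie-breaking in the worst case), if two different structures $\S^*\ne\S'^*$ ever produced identical transcripts along the queries chosen by the algorithm, they would reach a common leaf carrying a single guess and one of them would be misidentified. Hence the number of leaves must be at least the number of set partitions of $[n]$, i.e.\ the Bell number $B_n$, yielding
\[
2^{nT}\ \ge\ B_n,\qquad\text{so}\qquad T\ \ge\ \frac{\log_2 B_n}{n}.
\]

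What remains is the classical asymptotic estimate $\log_2 B_n \ge n\log_2 n - n\log_2\log_2 n - O(n)$; one convenient route is the well-known bound $B_n \ge \bigl(n/(e\ln n)\bigr)^n$ for large $n$, or equivalently counting the partitions of $[n]$ into roughly $n/\ln n$ blocks of nearly equal size via Stirling. Dividing by $n$ gives $T\ge \log_2 n - \log_2\log_2 n - O(1) = \log_2 n - O(\log\log n)$, as required. The main obstacle is this Bell-number asymptotic; the decision-tree counting itself is short, and the only subtlety to address carefully is that the observation oracle delivers at most $2^n$ distinct outcomes per round rather than a single bit, so the tree branches by $2^n$ instead of $2$ at each internal node.
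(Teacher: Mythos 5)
Your proposal is correct and follows essentially the same information-theoretic counting argument as the paper: each round yields at most $n$ bits (equivalently, at most $2^n$ branches in the decision tree), there are $B_n$ candidate coalition structures to distinguish, and the asymptotic $\log_2 B_n = n\log_2 n - O(n\log_2\log_2 n)$ gives the bound. Your decision-tree formalization and the explicit Bell-number lower bound $B_n \ge (n/(e\ln n))^n$ merely make rigorous what the paper states informally.
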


\begin{proof}
	In each round, the algorithm receives at most $n$ bits of information. As the number of possible set partitions of $N$ is the Bell number $B_n$, to distinguish between them, we need at least $\left\lceil \log_2 B_n\right\rceil=n \log_2 n - O(n\log_2\log_2 n)$ bits of information. The equation follows from the asymptotic expression of Bell number established in \cite{de1981asymptotic}. The theorem then follows.
\end{proof}

\section{Multiple-bit CSL with Normal-form Games}
\label{sec:normal_form_games}

We show in this section an algorithm that solves Multiple-bit CSL with normal-form games in $\log_2 n + 2$ rounds. Note that the number of rounds matches the lower bound in \cref{thm:lower_bound} up to $O(\log \log n)$. The main idea of the algorithm is to use binary search to find another agent in the same coalition for each agent simultaneously. We start by defining game-strategy pairs and a special pair that we call the \textit{directed prisoner's dilemma}, which we use in the algorithm.

\begin{definition}
    \label{def:game-strategy-pair}
    A \textbf{game-strategy pair} $(\G, \Sigma)$ is a $n$-player normal-form game along with a pure strategy profile $\Sigma$ in $\G$. Here, $\Sigma$ is called the \textbf{specified strategy profile} of $(\G, \Sigma)$.
\end{definition}

\begin{definition}
    \label{def:directed_prisoner_dilemma}
	For $i\in N, j\in N$, a \textbf{directed prisoner's dilemma} $\P(i,j)$ is a game-strategy pair $(\G, \Sigma)$. In game $\G$, agent $j$ has two actions $\{\textup{C},\textup{D}\}$, and all other agents only have one action $\{\textup{D}\}$. If agent $j$ plays $\textup{C}$, then $i$ and $j$ receive utility $2$ and $-1$ respectively. Otherwise they both receive $0$ utility. All other agents always receive $0$ utility. $\Sigma$ is the pure strategy profile where all agents play $\textup{D}$.
\end{definition}

\begin{figure}[htbp]
	\centering
	\begin{equation*}
		\begin{array}{|c|c|c|}
			\hline & \textup{C} & \textup{D} \\
			\hline \textup{D} & (2,-1) & (0,0) \\
			\hline
		\end{array}
	\end{equation*}
	\caption{The payoff table of agents $i$ and $j$ in the directed prisoner's dilemma $\P(i,j)$. Agent $i$ is the row player and agent $j$ is the column player. The others only have one action and are not shown in the table.}
	\label{fig:directed_prisoner_dilemma}
\end{figure}

Given the construction of \textbf{directed prisoner's dilemma}, we have the following observation.

\begin{lemma}
\label{lem:directed_prisoner_dilemma}
Let $i,j\in N$ and $\mathbf O = \O(\P(i,j))$. Then, $O_j = \mathrm{True}$ if and only if agent $i$ and agent $j$ are in the same coalition under $\S^*$. Moreover, $O_k = \mathrm{False}$ for $k\in N\setminus\{j\}$.
\end{lemma}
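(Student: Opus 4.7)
The plan is straightforward case analysis built on one structural observation: in $\P(i,j)$ agent $j$ is the only agent with a nontrivial action set, since every other agent has the singleton action set $\{\textup{D}\}$. So I would first argue that for any coalition $S$ and any agent $k\in S$ with $k\ne j$, the strategy $\sigma_k=\textup{D}$ is forced; in particular, $\BR_k(S,\Sigma)=\sigma_k=\textup{D}$ whenever the coalition chooses to deviate, and otherwise the definition sets $O_k = \mathrm{False}$ by default. This immediately yields the ``moreover'' clause that $O_k = \mathrm{False}$ for all $k\ne j$, without any reference to $\S^*$.

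Next I would reduce the problem to a single deviation question: for the coalition $S^\star := [j]_{\S^*}$ containing $j$, is $\Sigma_{S^\star}$ a joint best response to $\Sigma_{-S^\star}$? Because $j$'s two options are the only degree of freedom, this collapses to comparing the coalition's total utility under $j\mapsto\textup{D}$ against its total utility under $j\mapsto\textup{C}$. I would compute both using the payoff table in \cref{fig:directed_prisoner_dilemma}: the former gives $0$, and the latter gives $-1 + 2\cdot\ind{i \in S^\star}$, where the indicator captures the fact that $i$'s payoff of $2$ is pooled into $S^\star$'s total only if $i$ and $j$ share a coalition.

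From these two numbers the two cases fall out immediately. If $i$ and $j$ are in the same coalition, the alternative gives utility $1 > 0$, so $S^\star$ strictly prefers to deviate, $\BR_j(S^\star,\Sigma) = \textup{C} \ne \sigma_j$, and hence $O_j = \mathrm{True}$. If $i$ and $j$ are in different coalitions, the alternative gives $-1 < 0$, so $\Sigma_{S^\star}$ is already the unique best response and $O_j = \mathrm{False}$ by definition. Combining with the ``moreover'' clause completes the proof.

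The only subtle point, and where I would be most careful, is the tie-breaking clause in the behavior model, which allows an adversarial choice among joint best responses whenever there are several. In both cases above the strict inequalities ($1 > 0$ and $0 > -1$) make $S^\star$'s best response unique, so the worst-case tie-breaking never activates and the observation $\O(\P(i,j))$ is completely determined by whether $i$ and $j$ are co-coalitional.
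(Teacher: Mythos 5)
Your proof is correct and follows essentially the same route as the paper's: both reduce the question to whether agent $j$'s coalition prefers $\textup{C}$ to $\textup{D}$, which depends only on whether $i$'s payoff of $2$ is pooled into the coalition, and both dispatch the ``moreover'' clause by noting that every agent other than $j$ has a singleton action set. Your explicit utility computation ($0$ versus $-1+2\cdot\ind{i\in S^\star}$) and the remark that strict inequalities neutralize the worst-case tie-breaking are just a more detailed rendering of the paper's ``dominant strategy'' argument.
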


\begin{proof}
	If agent $i$ and agent $j$ are in the same coalition under $\S^*$, then playing $\textup{C}$ is a dominant strategy for agent $j$'s coalition. Otherwise, playing $\textup{D}$ is a dominant strategy. Therefore, $O_j = \mathrm{True}$ if and only if agent $i$ and agent $j$ are in the same coalition under $\S^*$. Moreover, since all agents other than $j$ only have one action $\{\textup{D}\}$, $O_k = \mathrm{False}$ for $k\in N\setminus\{j\}$.
\end{proof}

\cref{lem:directed_prisoner_dilemma} shows that we can use directed prisoner's dilemmas to determine whether two agents are in the same coalition under $\S^*$. However, querying a single directed prisoner's dilemma is not efficient as we only get $1$ bit of information. To utilize the information from the observation oracle more efficiently, we will need to consider the \textit{product} of multiple directed prisoner's dilemmas.

\begin{definition}[\cite{xu2023learning}]
	\label{def:product_of_normal_form_games}
    Let $(\G_1,\Sigma_1),(\G_2, \Sigma_2)$ be two game-strategy pairs where $A_{x,i},u_{x,i}$ are the action set and utility function of agent $i$ in $\G_x$ respectively for $x\in\{1,2\}$. Let $\Sigma_1 = (\sigma_{1,i})_{i\in N}$ and $\Sigma_2 = (\sigma_{2,i})_{i\in N}$. 
    The \textbf{product} of $(\G_1, \Sigma_1)$ and $(\G_2, \Sigma_2)$ is a game-strategy pair $(\G_p, \Sigma_p)$. Here, $\G_p$ is a normal-form game with action set $A_{1,i}\times A_{2,i}$ and utility function $u_{1,i} + u_{2,i}$ for each $i\in N$. $\Sigma_{p,i} = (\sigma_{1,i},\sigma_{2,i})_{i\in N}$. We denote the product of $(\G_1, \Sigma_1)$ and $(\G_2, \Sigma_2)$ as $(\G_1, \Sigma_1)\times(\G_2, \Sigma_2)$ or $\prod_{x=1}^{2}(\G_x, \Sigma_x)$. The game-strategy pairs $(\G_1, \Sigma_1),(\G_2, \Sigma_2)$ are called the \textbf{factors} of $(\G_p, \Sigma_p)$.
\end{definition}


\begin{lemma}
	\label{lem:product_of_directed_prisoner_dilemma}
	Let $\{(i_x,j_x)\mid x \in \{1,2,\dots,k\}\}$ be $k$ pairs of agents and $\mathbf O = \O(\prod_{x=1}^{k} \P(i_x,j_x))$. Then, $O_j = \mathrm{True}$ if and only if there exists $x\in\{1,2,\dots,k\}\textbf{ such that }j_x = j$ and $i_x\in[j]_{\S^*}$.
\end{lemma}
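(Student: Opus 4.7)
The plan is to exploit the product structure: because both the action space and the utility of every agent in $\prod_{x=1}^{k}\P(i_x,j_x)$ factor across the $k$ components, any coalition's joint best response also decomposes into independent per-factor best responses. Fix an agent $j$ and let $S=[j]_{\S^*}$. First I would observe that in factor $\P(i_x,j_x)$ every agent except $j_x$ is restricted to the single action $\textup{D}$, so the only nontrivial per-factor decision for $S$ arises when $j_x\in S$; moreover, agent $j$'s action in the product at factor $x$ is forced to $\textup{D}$ unless $j_x=j$.

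Next I would analyze the per-factor incentive of $S$ in a factor $x$ with $j_x\in S$: switching $j_x$ from $\textup{D}$ to $\textup{C}$ changes $S$'s total utility by $-1$, plus $+2$ if $i_x\in S$, i.e.\ by $+1$ when $i_x\in S$ and by $-1$ otherwise. Hence the unique per-factor best response of $S$ is for $j_x$ to play $\textup{C}$ iff $i_x\in S$, and to play $\textup{D}$ otherwise. Combining across factors, the joint best response $\BR(S,\Sigma)$ is uniquely determined: $j_x$ plays $\textup{C}$ in factor $x$ iff $j_x\in S$ and $i_x\in S$, and $\textup{D}$ everywhere else.

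I would then conclude by splitting into two cases according to the observation-oracle definition. If no factor $x$ satisfies $j_x\in S$ and $i_x\in S$, the all-$\textup{D}$ profile $\Sigma_S$ is already a strict best response, so $O_i=\mathrm{False}$ for every $i\in S$; in particular $O_j=\mathrm{False}$, which matches the ``only if'' direction because the stronger condition $j_x=j$ with $i_x\in[j]_{\S^*}$ certainly fails. Otherwise $\Sigma_S$ is not a best response, $\BR(S,\Sigma)$ is computed as above, and $\BR_j(S,\Sigma)$ differs from $\sigma_j=\textup{D}$ in factor $x$ precisely when $j_x=j$ (so $j$ has a real choice at all in that component) and $i_x\in S$. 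Taking the indicator over $x$ gives the claimed equivalence for $O_j$.

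The main subtlety I expect to handle carefully is the interaction between the oracle's two-step rule (first decide whether $\Sigma_S$ is a joint best response, then compute $\BR$) and the decomposition of the product game. I need to rule out a spurious effect in which some other member of $S$ strictly gains by deviating in a factor with $j_x\neq j$ and somehow causes $O_j$ to flip; this is immediate because $j$'s action in such a factor is literally forced by the singleton action set $\{\textup{D}\}$, but I would state it explicitly so that the equivalence really is controlled only by those factors with $j_x=j$.
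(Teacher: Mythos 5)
Your proof is correct and follows essentially the same route as the paper's: both arguments rest on the observation that the product game's action spaces and utilities factor across components, so a coalition's joint best response decomposes into independent per-factor best responses, and agent $j$'s deviation decision is therefore governed exactly by the factors with $j_x=j$ and $i_x\in[j]_{\S^*}$. The paper simply delegates the per-factor incentive analysis to \cref{lem:directed_prisoner_dilemma}, whereas you re-derive it inline and additionally make explicit the (correct) handling of the oracle's two-step rule and the uniqueness of the best response.
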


\begin{proof}
	By \cref{def:product_of_normal_form_games}, playing the product game is equivalent to separately playing each factor game, and summing up the resulting utilities of each agent. Therefore, agent $j$ decides to deviate in the product game $\prod_{x=1}^{k} \P(i_x,j_x)$ if and only if agent $j$ decides to deviate in at least one factor game. The result then follows from \cref{lem:directed_prisoner_dilemma}.
\end{proof}

A graphical interpretation of \cref{lem:product_of_directed_prisoner_dilemma} is as follows. When we query the game-strategy pair $\prod_{x=1}^{k} \P(i_x,j_x)$, consider a directed graph where each vertex represents an agent and each edge $(j_x,i_x)$, $x \in \{1,2,\dots,k\}$ represents $\P(i_x,j_x)$. The observation tells us for each vertex if any of its outgoing neighbors 
in this constructed graph is in the same coalition. With \cref{lem:product_of_directed_prisoner_dilemma}, we are ready for our \cref{alg:normal_form} for Multiple-bit CSL with normal-form games. 

\IncMargin{1.0em}
\begin{algorithm}[ht]

	\Indmm\Indmm
	\KwIn{The number of agents $n$ and the multiple-bit observation oracle $\O$.}
	\KwOut{A coalition structure $\S$ of the agents.}
	\Indpp\Indpp
	
	Query $\prod_{i\in N,j\in N, i<j}\P(i,j)$ and observe $\mathbf O$\;
	Let $T_j \gets \{1,2,\dots,j-1\}\textbf{ if } O_j=\mathrm{True}\textbf{ else }\varnothing$ $\textbf{ for each }j\in N$\;
	\While{$\exists j \in N\textbf{ such that }|T_j|\geq 2$}{
		Let $L_j \gets \{\text{the smallest $\lfloor \frac{|T_j|}2\rfloor$ elements in $T_j$}\}$\textbf{ for each }$j\in N$\;
		Let $R_j \gets T_j \setminus L_j$\textbf{ for each }$j\in N$\;
		Query $\prod_{j\in N,i\in L_j}\P(i,j)$ and observe $\mathbf O$\;
		Let $T_j \gets L_j\textbf{ if } O_j=\mathrm{True}\textbf{ else } R_j$ $\textbf{ for each }j\in N$\;
	}
	Let $\S\gets\{\{1\},\{2\},\dots,\{n\}\}$\;
	\For{$j\in N\textbf{ such that }|T_j|\ne \varnothing$}{
		Let $i\gets\text{the only element in $T_j$}$\;
		Merge $[i]_\S$ and $[j]_\S$ in $\S$\;
	}
	\Return{$\S$}\;
	\caption{Simultaneous Binary Search}
	\label{alg:normal_form}
\end{algorithm}
\DecMargin{1.0em}

Intuitively, given an agent $j$, if we want to find another agent $i\ (i<j)$ in the same coalition with $j$, we can query a game-strategy pair $\prod_{k\in T_j} \P(k,j)$ where $T_j =\{1,2,\ldots,j-1\}$ initially. According to \cref{lem:product_of_directed_prisoner_dilemma}, we will know whether there is such an agent in $T_j$ from the $j$-th bit of the observation. If there is such an agent, we can then bisect the set $T_j$ and apply binary search to find this agent. Moreover, this binary search can be done simultaneously for different agents. For example, given two agents $j$ and $j'$, we can construct the game-strategy pairs $\prod_{k\in T_{j}} \P(k,j)\times\prod_{k\in T_{j'}} \P(k,j')$ and apply binary search at the same time on $T_{j}$ and $T_{j'}$ using the $j$-th and $j'$-th bits of the observation. This is because the $j$-th bit in the observation vector only depends on whether there exists another agent in $T_{j}$ in the same coalition with agent $j$, and the same argument applies to agent $j'$. 

\cref{alg:normal_form} leverages this intuition and works in two stages.
In the first stage (Lines 1 to 7), we try to find for each agent $j$ the smallest index of the agents in $[j]_{\S^*}$. To do this, we first query $\prod_{i\in N,j\in N, i<j}\P(i,j)$ (Line 1). In this game, agent $j$ has the incentive to deviate only when another agent $i$ with a smaller index than $j$ is in the same coalition with $j$. Therefore, using the $j$-th bit from the observation, we can determine whether agent $j$ is the agent with the smallest index in $[j]_{\S^*}$ (Line 2). For agents who are not, we then use binary search to locate the smallest indexed agents in their coalitions simultaneously (Lines 3 to 6).
In the second stage (Lines 8 to 11), it merges each coalition in $\S^*$ together according to the result of the first stage.

We provide an example of \cref{alg:normal_form} in \cref{fig:binary}.

\begin{figure*}[!t]
    \centering
    
    \begin{subfigure}[b]{0.32\textwidth}
        \centering
        \begin{tikzpicture}[scale=1]
            \node[circle, draw] (1) at (0,2) {1};
            \node[circle, draw] (3) at (2,2) {3};
            \node[circle, draw] (2) at (0,0) {2};
            \node[circle, draw] (4) at (2,0) {4};

            \draw[->, >=latex, line width=0.5pt] (2) -- (1);
            \draw[->, >=latex, line width=0.5pt] (3) -- (1);
            \draw[->, >=latex, line width=0.5pt] (3) -- (2);
            \draw[->, >=latex, line width=0.5pt] (4) -- (1);
            \draw[->, >=latex, line width=0.5pt] (4) -- (2);
            \draw[->, >=latex, line width=0.5pt] (4) -- (3);
        \end{tikzpicture}
        \caption{First query}
		\label{subfigure:binary_a}
    \end{subfigure}
    \begin{subfigure}[b]{0.32\textwidth}
        \centering
        \begin{tikzpicture}[scale=1]
            \node[circle, draw] (1) at (0,2) {1};
            \node[circle, draw] (3) at (2,2) {3};
            \node[circle, draw] (2) at (0,0) {2};
            \node[circle, draw] (4) at (2,0) {4};

            \node at (0,1) {$T_3$};
            \node at (1.5,1.2) {$T_4$};

            \draw[rounded corners, dashed] ($(1.north west)+(-0.2,0.2)$) rectangle ($(2.south east)+(0.2,-0.2)$);
			\draw[rounded corners, dashed] ($(3.north east)+(0.35,0.35)$) -- ($(3.south east)+(0.35,-0.35)$) -- ($(1.south east)+(0.35,-0.35)$) -- ($(2.south east)+(0.35,-0.35)$) -- ($(2.south west)+(-0.35,-0.35)$) -- ($(1.north west)+(-0.35,0.35)$) -- cycle;

            \draw[->, >=latex, line width=0.5pt] (3) -- (1);
            \draw[->, >=latex, line width=0.5pt] (4) -- (1);
        \end{tikzpicture}
        \caption{Second query}
		\label{subfigure:binary_b}
    \end{subfigure}
    \begin{subfigure}[b]{0.32\textwidth}
        \centering
        \begin{tikzpicture}[scale=1]
            \node[circle, draw] (1) at (0,2) {1};
            \node[circle, draw] (3) at (2,2) {3};
            \node[circle, draw] (2) at (0,0) {2};
            \node[circle, draw] (4) at (2,0) {4};

            \node at (0.7,0) {$T_3$};
            \node at (0.7,2) {$T_4$};

            \draw[rounded corners, dashed] ($(1.north west)+(-0.2,0.2)$) rectangle ($(1.south east)+(0.2,-0.2)$);
            \draw[rounded corners, dashed] ($(2.north west)+(-0.2,0.2)$) rectangle ($(2.south east)+(0.2,-0.2)$);
        \end{tikzpicture}
        \caption{Final result}
		\label{subfigure:binary_c}
    \end{subfigure}
    \caption{Example execution of \cref{alg:normal_form} when $\S^* = \{\{1,4\}, \{2, 3\}\}$. The vertices represent the agents, the edges represent the directed prisoner's dilemmas that the algorithm queries each time, and the dashed rectangles represent the sets $T_j$. In the first query, the algorithm queries $\prod_{i\in N,j\in N, i<j}\P(i,j)$ (Line 1) as shown in (a). Using the observations, the algorithm sets $T_1 = T_2 = \varnothing $, $T_3 = \{1, 2\}$, and $T_4 = \{1, 2, 3\}$ (Line 2). In the second query, $T_3$ is bisected into $L_3 = \{1\},R_3 = \{2\}$, $T_4$ is bisected into $L_4 = \{1\},R_4 = \{2,3\}$, and the algorithm queries $\prod_{j\in N,i\in L_j}\P(i,j)$ (Lines 3 to 6) as shown in (b). Using the observations, the algorithm sets $T_3 = L_3 = \{1\}$ and $T_4 = R_4 = \{2,3\}$ (Line 7) as shown in (c). Finally, the algorithm merges $\{1\}$ and $\{4\}$, and $\{2\}$ and $\{3\}$ together to recover the coalition structure (Lines 8 to 11).}
	\label{fig:binary}
\end{figure*}
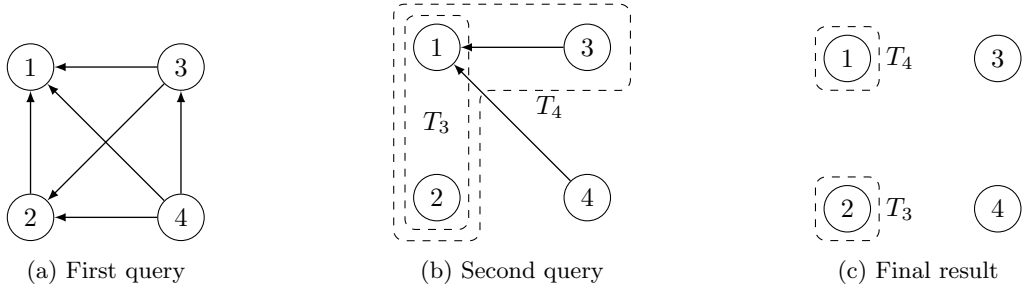

\begin{restatable}{theorem}{normal}
	\label{thm:normal_form}
	Algorithm \ref{alg:normal_form} solves Multiple-bit CSL with normal-form games in $\log_2 n + 2$ rounds.
\end{restatable}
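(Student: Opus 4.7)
The plan is to split the proof into correctness and round count, with both parts driven by \cref{lem:directed_prisoner_dilemma,lem:product_of_directed_prisoner_dilemma}. The high-level picture is that the first query acts as an initialization that, for every agent $j$, identifies whether $j$ has \emph{any} coalition-mate with strictly smaller index, while each subsequent query performs one simultaneous bisection step in parallel across all $j$.

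For correctness, I would first observe that after Line 1, \cref{lem:product_of_directed_prisoner_dilemma} applied to the set $\{(i,j) : i<j\}$ immediately yields $O_j = \mathrm{True}$ iff $[j]_{\S^*}$ contains some $i<j$. Thus Line 2 correctly sets $T_j = \{1,\ldots,j-1\}$ precisely when $j$ has a smaller-indexed coalition-mate and $T_j = \varnothing$ otherwise. I would then maintain the loop invariant that whenever $T_j$ is non-empty, it contains at least one agent of $[j]_{\S^*}$. The invariant is preserved by each bisection: the query at Line 6, again by \cref{lem:product_of_directed_prisoner_dilemma}, returns $O_j = \mathrm{True}$ iff $L_j$ contains a coalition-mate of $j$, in which case Line 7 sets $T_j \gets L_j$; otherwise the coalition-mate guaranteed by the invariant must lie in $R_j$, so $T_j \gets R_j$ still preserves it. When the loop exits, every non-empty $T_j = \{i_j\}$ is a singleton with $i_j \in [j]_{\S^*}$. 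The merging phase (Lines 8--11) then fuses $j$ with $i_j$ for each such $j$, and a standard argument on the resulting smaller-index-pointing structure shows that the produced components are exactly the coalitions in $\S^*$: within any coalition $S=\{a_1<\cdots<a_k\}$, each $a_t$ with $t\geq 2$ is merged to some smaller-indexed element of $S$, yielding a single connected component, and no cross-coalition merge can occur since $i_j \in [j]_{\S^*}$ always.

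For the round count, I would simply track the size of each $T_j$. The initialization uses one round, and in each iteration of the while loop $|T_j|$ becomes $\lfloor |T_j|/2\rfloor$ or $\lceil |T_j|/2\rceil$, so $|T_j|$ at most halves (rounded up). Since $|T_j| \leq n-1$ right after Line 2, after $\lceil \log_2(n-1)\rceil$ iterations every $|T_j|\leq 1$ and the loop exits. The total number of queries is therefore at most $1 + \lceil \log_2(n-1) \rceil \leq \log_2 n + 2$.

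The main obstacle, or at least the only step requiring more than routine bookkeeping, is verifying that the simultaneous bisections do not interfere with one another: when we query the product $\prod_{j\in N,\, i\in L_j} \P(i,j)$, the bit $O_j$ must depend only on $L_j$ and on $[j]_{\S^*}$, independently of which sets $L_{j'}$ are being bisected for other agents $j'$. Fortunately this independence is exactly what \cref{lem:product_of_directed_prisoner_dilemma} delivers via the graphical outgoing-neighbor interpretation discussed right after it, so once the lemma is invoked cleanly the rest of the argument reduces to the inductive invariant and halving count above.
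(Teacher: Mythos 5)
Your proof is correct and follows essentially the same route as the paper's: initialize with the first product query, run the simultaneous binary search with \cref{lem:product_of_directed_prisoner_dilemma} driving each bisection, and finish with the merging argument and the halving count. The only (harmless) difference is the invariant --- the paper tracks the specific \emph{smallest}-indexed coalition-mate of $j$ surviving in $T_j$ (exploiting that $L_j$ is the lower half), whereas you maintain the weaker invariant that $T_j$ retains \emph{some} coalition-mate, which still suffices because the final merging only needs each non-minimal agent to be fused with an arbitrary smaller-indexed member of its own coalition.
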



\begin{proof}
	We first show the correctness of the algorithm. To do this, we will show two claims: (i) after Lines 1 to 7, for each $j\in N$, if $j$ is the agent with the smallest index in $[j]_{S^*}$, $T_j=\varnothing$, otherwise, $T_j$ contains only the smallest index of the agents in $[j]_{S^*}$; (ii) after Lines 8 to 11, $\S = \S^*$.
	
	For (i), if $j$ is the agent with the smallest index in $[j]_{S^*}$, then by \cref{lem:product_of_directed_prisoner_dilemma}, $O_j = \mathrm{False}$, and $T_j = \varnothing$ on Line 2. Throughout the algorithm, $T_j$ remains $\varnothing$. Otherwise, $T_j$ contains the smallest index of the agents in $[j]_{S^*}$ after Line 2. In the while loop (Line 3 to 7), $T_j$ is updated to $L_j$ if $O_j = \mathrm{True}$, and to $R_j$ otherwise. Since $L_j$ contains the smallest $\lfloor|T_j|/2\rfloor$ elements in $T_j$, by \cref{lem:product_of_directed_prisoner_dilemma}, $O_j = \mathrm{True}$ if and only if the smallest index of the agents in $[j]_{S^*}$ is in $L_j$. Therefore, after one iteration of the loop, $T_j$ still contains only the smallest index of the agents in $[j]_{S^*}$, while the size of $T_j$ is halved. The loop terminates after $|T_j|=1$, and $T_j$ contains only the smallest index of the agents in $[j]_{S^*}$.

	For (ii), since every agent $j$ is either the agent with the smallest index in $[j]_{S^*}$ or merged with that agent, each coalition is merged in $\S$ after Lines 8 to 11. Thus $\S$ becomes the same as $\S^*$.

	Next, we show the complexity of the algorithm. The while loop in Lines 3 to 7 runs at most $\lceil\log_2 n\rceil$ times, and each iteration requires $1$ query. Together with the query on Line 1, the total number of queries is at most $\lceil\log_2 n\rceil + 1 \leq \log_2 n + 2$.
\end{proof}

\paragraph{Multiple-bit CSL with Other Types of Games.} A natural next step after this section is to consider the case where the algorithm can only design games of a specific type. We will show that the idea of simultaneous binary search from \cref{alg:normal_form} can be adapted to solve the Multiple-bit CSL problem with congestion games (\cref{sec:congestion_games}), graphical games (\cref{sec:graphical_games}), or auctions (\cref{sec:auctions}). The adaptation to congestion games is natural once we construct a new game-strategy pair that is similar in spirit to the directed prisoner's dilemma. However, the adaptation to graphical games and auctions requires more innovations in algorithm design.

\section{Multiple-bit CSL with Auctions}
\label{sec:auctions}

We consider in this section the case of Multiple-bit CSL with auctions. The auction format we consider is theoretically well-studied \textit{second-price auctions with personalized reserves} \cite{paes2016field}, which is widely used in practice. We first formally define this format.

\begin{definition}
    \label{def:auction}
    A \textbf{second-price auction with personalized reserves} $\G$ is a tuple $(N, \v, \r)$ where $N$ is the set of agents, $\v = (v_1,v_2,\dots,v_n)$ is the valuation vector, and $\r = (r_1,r_2,\dots,r_n)$ is the reserve price vector. The auction proceeds as follows: Each agent $i$ submits a bid $\sigma_i \in \mathbb R_{\geq 0}$ and the agent with the highest bid wins the auction with ties broken at uniform random. If the winner $i$ bids greater than $r_i$, then the winner gets the item at a price equal to the maximum between the second highest bid and $r_i$. The item can be reallocated among $[i]_{\S^*}$. Otherwise, the item is not allocated. 
\end{definition}

Our algorithm for the Multiple-bit CSL problem with auctions works in $(1 + \log_2 n)(1 + c) + 1$ rounds, where $c$ is size of the largest coalition in $\S^*$. To introduce the algorithm, we first define the game-strategy pairs that we will use in the algorithm, the \textit{auction game gadgets}, where we will specify the valuation and reserve prices of every agents in each partition bucket.

\begin{definition}
	\label{def:auction_gadget}
	Let $\{X,Y,Z\}$ be a partition of $N$, i.e., $X \cap Y = X \cap Z = Y \cap Z =\varnothing$ and $X \cup Y \cup Z = N$. An \textbf{auction gadget} $\A(X, Y, Z)$ is a game-strategy pair $(\G, \Sigma)$ where $\G = (N, \v, \r)$ is a second-price auction with personalized reserves, and $\Sigma$ is a specified bid vector for the agents, where
	\begin{equation*}
		v_i = \left\{\begin{array}{ll}
            1 & (i \in X) \\
            0 & (i \in Y) \\
			0 & (i \in Z),
        \end{array}\right.
		r_i = \left\{\begin{array}{ll}
			1 & (i \in X) \\
			0 & (i \in Y) \\
			1 & (i \in Z),
		\end{array}\right.\text{and}
        \ \ \ 
		\sigma_i = 0
	\end{equation*}
\end{definition}

Note that in the above auction game gadgets, we do not require each agent to know all the other agents' valuations so that the valuation of each agent is still \emph{private} across strategic agents. Only agents in the same coalition know each other's true valuation. 

\begin{lemma}
	\label{lem:auction_gadget}
	For $\{X,Y,Z\}$ that is a partition of $N$, let $\mathbf O = \O(\A(X, Y, Z))$. For each coalition $S \in \S^*$, if $S \cap X \ne \varnothing$ and $S \cap Y \ne \varnothing$, then, there is exactly one agent $i \in S \cap Y$ with $O_i = \mathrm{True}$ and $O_j = \mathrm{False}$ for all $j \in S \setminus \{i\}$. Otherwise, $O_i = \mathrm{False}$ for all $i \in S$.
\end{lemma}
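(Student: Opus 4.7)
My plan is to fix an arbitrary coalition $S \in \S^*$ and characterize its joint best responses against the other agents' specified bids $\Sigma_{-S} = 0$; the deviation pattern on $S$ in $\mathbf O = \O(\A(X,Y,Z))$ then follows directly from the definition of the multiple-bit observation oracle.

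First I would record two structural upper bounds, both immediate from \cref{def:auction}. (a) Since $v_i \le 1$ for every $i \in N$ with equality only on $X$, the item contributes at most $1$ to the coalition's utility, and exactly $1$ only when the allocated item is reallocated to some $j \in S \cap X$. (b) Since $r_i = 1$ for every $i \in X \cup Z$, any winner drawn from $S \cap (X \cup Z)$ must bid strictly more than $1$ to receive the item and therefore pays $\max(\text{second-highest bid}, 1) \ge 1$; combined with (a), this caps the coalition's net utility at $0$ whenever the winner lies outside $Y$. Together these bounds imply that positive coalition utility is attainable only when both $S \cap X \ne \varnothing$ and $S \cap Y \ne \varnothing$.

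In the ``non-deviation'' case $S \cap X = \varnothing$ or $S \cap Y = \varnothing$, I would verify that $\Sigma_S = 0$ already attains utility $0$: with all bids $0$ the tie-broken winner either lies outside $S$, or sits in $S$ but its zero bid fails to strictly exceed its reserve (so the item is not allocated), or is a $Y$-agent whose coalition value is $0$ under the assumption. Hence $\Sigma_S$ is a joint best response and $O_i = \mathrm{False}$ for all $i \in S$. In the complementary case $S \cap X \ne \varnothing$ and $S \cap Y \ne \varnothing$, I would exhibit a deviation achieving utility $1$: pick any $i^* \in S \cap Y$, let $i^*$ bid some $b > 0$, and let every other member of $S$ bid $0$. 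Then $i^*$ is the unique highest bidder, wins at price $\max(0, r_{i^*}) = 0$, and reallocates the item to any $j \in S \cap X$ with $v_j = 1$, giving coalition utility $1$. Combined with the upper bound this pins the optimal utility at $1$, so $\Sigma_S$ (utility $0$) is not a best response and the coalition must switch to some profile attaining $1$. The key step is to show every such optimal profile has the ``exactly one $Y$-deviator'' shape: utility $1$ forces the winner's price to be $0$, and the second-price formula $\max(\text{second-highest bid}, r_\text{winner})$ then forces both $r_\text{winner} = 0$ (so the winner lies in $S \cap Y$) and every other bid in the game to be $0$ (so no other member of $S$ deviates).

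The main obstacle is this uniqueness argument. I need to rule out every competing maximizer by careful price bookkeeping: if two or more $Y$-agents in $S$ bid positively, the second-highest bid becomes positive and the price strictly exceeds $0$; if some $X$- or $Z$-agent in $S$ also bids, then either the bid is below $1$ (and again inflates the second-highest bid without ever winning profitably) or it is at least $1$, risking a winner with reserve $1$ and price $\ge 1$. Each sub-case closes with a one-line comparison to the benchmark utility $1$, after which the deviation vector restricted to $S$ is pinned down to a single $\mathrm{True}$ entry located in $S \cap Y$, matching the claim.
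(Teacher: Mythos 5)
Your proposal is correct and follows essentially the same route as the paper's proof: fix a coalition $S$, show that positive utility is attainable exactly when $S\cap X\ne\varnothing$ and $S\cap Y\ne\varnothing$, exhibit the single-$Y$-deviator profile achieving utility $1$, and conclude that in the complementary case the all-zero bids are already a joint best response. The only difference is that you spell out the price-bookkeeping that pins down the ``exactly one deviator in $S\cap Y$'' shape of every optimal profile, which the paper asserts in one sentence; this added detail is sound and does not change the argument.
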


\begin{proof}
	For each coalition $S \in \S^*$. We first consider the case where $S \cap X \ne \varnothing$ and $S \cap Y \ne \varnothing$. Let agent $i \in S \cap X$ and agent $j \in S \cap Y$. If the coalition follows the specified strategy profile $\Sigma_S$, then the item will not be allocated, so the utility of the agents in $S$ is $0$. However, If agent $j$ raises its bid to $\sigma'_j > 0$, then it becomes the winner of the auction, and the item will be allocated to $j$ with price $\max\{second\ highest\ bid, r_j\} = 0$. The item can then be reallocated to agent $i$, giving utility $1$ to the agents in $S$. Thus, $\Sigma_S$ is not a joint best response to $\Sigma_{-S}$. Moreover, for agents in $S$, the only joint best response is to let exactly one agent in $S \cap Y$ raise its bid to positive, and the other agents in $S$ keep their bids at $0$. Thus, the observation $\mathbf O$ is as described in the lemma.
	
	Otherwise, either $S \cap X = \varnothing$ or $S \cap Y = \varnothing$. If $S \cap X = \varnothing$, then no one in $S$ has a positive valuation for the item, so the utility of the agents in $S$ cannot be greater than $0$. If $S \cap Y = \varnothing$, then no one in $S$ has a reserve price that is smaller than $1$, which is the largest possible valuation. Thus, the utility of the agents in $S$ cannot be greater than $0$, either. In both cases, keeping the bids at $0$ is a joint best response to $\Sigma_{-S}$, so the observation $\mathbf O$ is as described in the lemma.
\end{proof}


The auction gadgets are important building blocks for our algorithm that works in $(1 + \log_2 n)(1 + c) + 1$ rounds. To help demonstrate how they can be used to solve Multiple-bit CSL, we present a simpler algorithm that works in $n - 1$ rounds using the auction gadgets below. 

\IncMargin{1.0em}
\begin{algorithm}[ht]

	\Indmm\Indmm
	\KwIn{The number of agents $n$ and the multiple-bit observation oracle $\O$.}
	\KwOut{A coalition structure $\S$ of the agents.}
	\Indpp\Indpp
	
	Let $T \gets N$\;
	Let $\S\gets\{\{1\},\{2\},\dots,\{n\}\}$\;
	\While{$|T| \geq 2$}{
		Let $i\gets \text{the first element in $T$}$\;
		\While{$|T| \geq 2$\textbf{ and } $i \in T$}{
			Let $X \gets \{i\}, Y \gets T \setminus \{i\}$ and $Z \gets N \setminus S$\;
			Query $\A(X,Y,Z)$ and observe $\mathbf O$\;
			\uIf{$\exists j$ \textbf{such that} $O_j = \mathrm{True}$}{
				Merge $[i]_\S$ and $[j]_\S$ in $\S$\;
				$T \gets T \setminus \{j\}$\;
			}\Else{
				$T \gets T \setminus \{i\}$\;
			}
		}
	}
	\Return{$\S$}\;
	\caption{Iterative Location with Auctions}
	\label{alg:auction_linear}
\end{algorithm}
\DecMargin{1.0em}

To provide some intuitions, consider an agent $i \in N$. If we want to find another agent in the same coalition as $i$, we can query the auction gadget $\A(\{i\}, N \setminus \{i\}, \varnothing)$. According to \cref{lem:auction_gadget}, if the observation $\mathbf O$ has $O_j = \mathrm{True}$ for some $j \in N \setminus \{i\}$, then $i$ and $j$ are in the same coalition under $\S^*$. Otherwise, $i$ is in a coalition by itself. Then, suppose we have found an agent $j$ in $i$'s coalition, and we would like to find a different agent in that coalition. We can query the auction gadget $\A(\{i\}, N \setminus \{i,j\}, \{j\})$. By moving agent $j$ to the set $Z$, we ensure that $j$ does not have the incentive to deviate, and either another agent in the same coalition as $i$ will have $O_k = \mathrm{True}$ for some $k \in N \setminus \{i,j\}$, or we can conclude that $i$ and $j$ are in a coalition by themselves. We can then repeat this process until we have found all the agents in the same coalition as $i$. After that, we can move $i$ to the set $Z$ and proceed with the next agent in $N$. In this way, we can recover the coalition structure $\S^*$. Moreover, since each time we query an auction gadget, we move one agent to set $Z$, and the algorithm terminates when $|Z| = n - 1$, it uses $n - 1$ queries in total.

\begin{restatable}{theorem}{auctionlinear}
	\label{thm:auction_linear}
	\cref{alg:auction_linear} solves the Multiple-bit CSL problem with auctions in $n - 1$ rounds.
\end{restatable}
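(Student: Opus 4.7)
The plan is to prove two things: first, that the returned $\S$ equals $\S^*$, and second, that the algorithm issues at most $n-1$ queries. The only technical tool I will use is \cref{lem:auction_gadget}, applied to each inner-loop query of the form $\A(\{i\}, T\setminus\{i\}, N\setminus T)$. Because $X=\{i\}$ is a singleton, the lemma has a particularly clean consequence here: the only coalition that can contribute $\mathrm{True}$ bits to the observation is $[i]_{\S^*}$, and it flips exactly one bit $O_j$ with $j\in [i]_{\S^*}\cap(T\setminus\{i\})$ whenever that set is nonempty, and no bits otherwise.

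For correctness, I would maintain an invariant at the end of every outer iteration: for every $a\in N\setminus T$, $[a]_\S=[a]_{\S^*}$, and in addition $[a]_{\S^*}\subseteq N\setminus T$. The base case is vacuous. For the inductive step, let $i$ be the ``first element in $T$'' of the new iteration. The invariant forces $[i]_{\S^*}\cap(N\setminus T)=\varnothing$ at the start of the iteration, since any $a\in[i]_{\S^*}$ already in $N\setminus T$ would drag $i$ into $N\setminus T$ as well, contradicting $i\in T$. A nearly identical argument shows $[i]_\S=\{i\}$ at the start. The inner loop then performs a sweep over $[i]_{\S^*}\cap T$: by \cref{lem:auction_gadget}, each query either (a) returns some $O_j=\mathrm{True}$ with $j\in[i]_{\S^*}\cap(T\setminus\{i\})$, in which case the merge on line 9 correctly grows $[i]_\S$ by one coalition-mate and the removal on line 10 preserves the invariant, or (b) returns the all-$\mathrm{False}$ vector, certifying $[i]_{\S^*}\cap(T\setminus\{i\})=\varnothing$; at that moment $[i]_\S=[i]_{\S^*}$ by construction, and removing $i$ on line 12 restores the invariant.

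The main obstacle I expect is a small bookkeeping point, namely verifying that at the merge on line 9 we always have $[j]_\S=\{j\}$, so that the merge adds exactly one new agent to $[i]_\S$. The justification is that any prior merge enlarging $[j]_\S$ would have simultaneously removed $j$ from $T$: either $j$ served as the ``first element'' of an earlier outer iteration and was removed on line 12 when its inner loop finished, or $j$ was the identified partner in some earlier execution of lines 9--10 and was removed on line 10. Either way, $j\notin T$ afterwards, contradicting the current assumption that $j\in T$. A closely related subtlety is handling termination when the outer loop exits with $|T|=1$: the same coalition-mate argument forces $[i]_{\S^*}=\{i\}$ for the lone surviving agent $i$, matching $[i]_\S=\{i\}$.

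For the round count, I note that each query strictly decreases $|T|$ by exactly one, through either line 10 or line 12, and the outer loop halts the first moment $|T|<2$. Since $|T|$ starts at $n$, the total number of queries is at most $n-1$, which yields the claimed bound.
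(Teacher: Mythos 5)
Your proof follows essentially the same route as the paper's: both arguments rest entirely on \cref{lem:auction_gadget} applied to the singleton-$X$ gadgets, both show that each inner loop sweeps out exactly $[i]_{\S^*}$, and both count rounds by observing that every query removes exactly one agent from $T$. Your version is more careful than the paper's (which does not state an explicit invariant), and the invariant-based induction, the $[j]_\S=\{j\}$ bookkeeping, and the round count are all sound.

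One claim in your termination analysis is wrong as stated, though the conclusion survives. When the outer loop exits with $T=\{i\}$, it is \emph{not} forced that $[i]_{\S^*}=\{i\}$: the loop may exit in the middle of sweeping $i$'s own coalition, before the certifying all-$\mathrm{False}$ query is ever issued. For example, with $n=2$ and $\S^*=\{\{1,2\}\}$, the single query returns $O_2=\mathrm{True}$, agent $2$ is merged and removed, and the algorithm halts with $T=\{1\}$ while $[1]_{\S^*}=\{1,2\}$. Relatedly, your invariant ``$[a]_{\S^*}\subseteq N\setminus T$ for all $a\in N\setminus T$'' fails at that final moment for the just-removed coalition-mates of $i$. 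The fix is easy: in this exit case every agent removed during the current inner loop was removed via line 10 and hence merged into $[i]_\S$, and every agent removed in an earlier outer iteration lies in a coalition already certified not to contain $i$; therefore $[i]_\S=[i]_{\S^*}$ even though no certifying query was made. With that case repaired, the proof is complete and matches the paper's.
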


We defer the proof to \cref{app:missing_parts}. Other than the auction gadgets, we will also use another game-strategy pair in the algorithm. We prove the following lemma about the observations obtained by querying this game-strategy pair.

\begin{lemma}
	\label{lem:auction_first_move}
	Let $\mathbf O = \O((N, \v = \mathbf 1, \r = \mathbf 0), \Sigma = \mathbf 0)$, where $\mathbf 0, \mathbf 1$ are vectors consisting purely of $0$s and $1$s respectively. For each coalition $S \in \S^*$, exactly one agent $i \in S$ has $O_i = \mathrm{True}$.
\end{lemma}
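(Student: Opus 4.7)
The plan is to analyze the joint best response of each coalition $S \in \S^*$ against the specified profile $\Sigma_{-S} = \mathbf{0}$ and show that this best response has a rigid structure: exactly one agent of $S$ bids strictly positive, while every other agent in $S$ stays at $0$. First I would observe that under $\Sigma = \mathbf{0}$, no agent's bid exceeds their reserve $r_i = 0$, so the item is never allocated and every coalition's baseline utility is $0$.

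Second, I would show that $\Sigma_S$ is never a joint best response by exhibiting a strictly better deviation. Pick any $j \in S$ and have $j$ bid some $\epsilon > 0$ while all other members of $S$ keep their bids at $0$. Since $\Sigma_{-S} = \mathbf{0}$, agent $j$ is the unique highest bidder, wins the auction, and pays $\max\{\text{second-highest bid},\, r_j\} = \max\{0, 0\} = 0$, obtaining the item of valuation $v_j = 1$. This yields coalition utility $1 > 0$, so $\Sigma_S$ is dominated.

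Third, I would rule out joint best responses in which two or more members of $S$ bid strictly positive. With at least two positive bids inside $S$ (and all agents outside $S$ bidding $0$ by assumption), the second-highest overall bid is strictly positive, so the winner pays a strictly positive price and the coalition's utility is at most $1$ minus that price, strictly below $1$. Combined with the previous step, every joint best response of $S$ has exactly one agent bidding positive and every other agent in $S$ bidding $0$, and this holds uniformly whether $|S| = 1$ or $|S| > 1$.

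Finally, I would read off $\mathbf{O}$ from this characterization: the unique deviating agent $j \in S$ has $\BR_j(S, \Sigma) \neq 0 = \sigma_j$ and so $O_j = \mathrm{True}$, while every other $k \in S \setminus \{j\}$ has $\BR_k(S,\Sigma) = 0 = \sigma_k$ and so $O_k = \mathrm{False}$. The worst-case tie-breaking across the multiple best responses (which differ only in which agent $j$ is chosen and in the exact positive bid value) only permutes the identity of the deviator, so exactly one agent in each coalition has $O_i = \mathrm{True}$. I expect the main subtlety to be the careful accounting of the second-price rule and the tie-breaking clause in \cref{def:auction}, but once the three-way case split on the number of positive bidders inside $S$ is in place, the rest is routine.
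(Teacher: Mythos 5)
Your proof is correct and follows essentially the same route as the paper's: show that a single positive bid by some coalition member yields utility $1$ at price $0$, so $\Sigma_S$ is not a joint best response, and that the unique form of a joint best response is exactly one positive bidder. The only difference is that you explicitly justify why two or more positive bidders inside $S$ is suboptimal (the second-highest bid becomes positive, so the winner pays a positive price), a step the paper asserts without elaboration.
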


\begin{proof}
	For each coalition $S \in \S^*$, if any agent $i$ in $S$ raises its bid to positive, then the item will be allocated to $i$ with price $0$, giving utility $1$ to the agents in $S$. Thus, $\Sigma_S$ is not a joint best response to $\Sigma_{-S}$. Moreover, the only joint best response for agents in $S$ is to let exactly one agent in $S$ raise its bid to positive, and the other agents in $S$ keep their bids at $0$. The lemma then follows.
\end{proof}

Then, we are ready to present our algorithm that works in $(1 + \log_2 n)(1 + c) + 1$ rounds for Multiple-bit CSL with auctions. The algorithm is shown in \cref{alg:auction_sublinear}.

\IncMargin{1.0em}
\begin{algorithm}[ht]

	\Indmm\Indmm
	\KwIn{The number of agents $n$ and the multiple-bit observation oracle $\O$.}
	\KwOut{A coalition structure $\S$ of the agents.}
	\Indpp\Indpp
	
	Query $((N, \v = \mathbf 1, \r = \mathbf 0), \Sigma = \mathbf 0)$ and observe $\mathbf O$\;
	Let $T_x \gets \{i \mid O_i = \mathrm{True}\}$ and $T_y \gets N \setminus T_x$\;
	\For{$b \in \{0,1,\dots,\lfloor \log_2 n \rfloor\}$}{
		Let $X \gets \{i \mid i \in T_x\textbf{ and }$ $\text{the $b$-th lowest binary bit of $i$ is $1$}\}$\;
		Let $T_{\mathrm{False}} \gets T_y$\;
		\Repeat{$T_{\mathrm{True}} = \varnothing$}{
			Let $Y \gets T_{\mathrm{False}}$ and $Z \gets N \setminus (X \cup Y)$\;
			Query $\A(X,Y,Z)$ and observe $\mathbf O$\;
			Let $T_{\mathrm{True}} \gets \{i \mid i \in T_{\mathrm{False}}\textbf{ and }O_i = \mathrm{True}\}$\;
			$T_{\mathrm{False}} \gets T_{\mathrm{False}} \setminus T_{\mathrm{True}}$\;
		}
		Let $O^{(b)}_i \gets \ind{i \not\in T_{\mathrm{False}}}\textbf{ for each }i \in T_y$\;
	}
	Let $\S\gets\{\{1\},\{2\},\dots,\{n\}\}$\;
	\For{$i\in T_y$}{
		Let $j \gets \sum_{b=0}^{\lfloor\log_2 n\rfloor}2^b\cdot O^{(b)}_i$\;
		Merge $[i]_\S$ and $[j]_\S$ in $\S$\;
	}
	\Return{$\S$}\;
	\caption{Bitwise Search with Auctions}
	\label{alg:auction_sublinear}
\end{algorithm}
\DecMargin{1.0em}

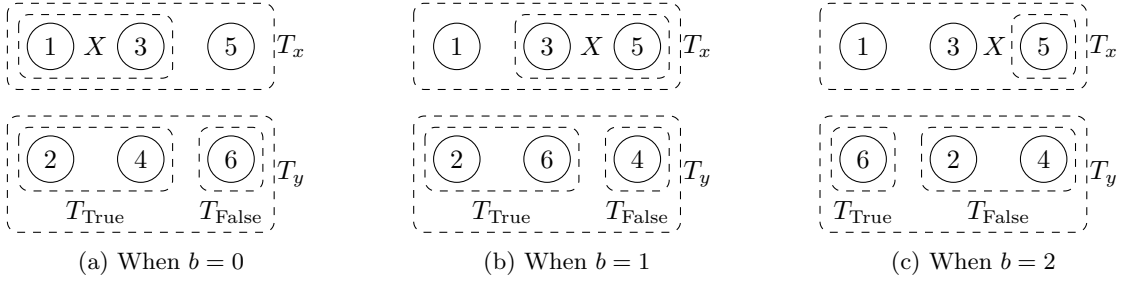
\begin{figure*}[!t]
    \centering
    
    \begin{subfigure}[b]{0.32\textwidth}
        \centering
        \begin{tikzpicture}[scale=1]
            \node[circle, draw] (1) at (0,1.5) {1};
            \node[circle, draw] (3) at (1.2,1.5) {3};
            \node[circle, draw] (5) at (2.4,1.5) {5};
            \node[circle, draw] (2) at (0,0) {2};
            \node[circle, draw] (4) at (1.2,0) {4};
            \node[circle, draw] (6) at (2.4,0) {6};

            \draw[rounded corners, dashed] ($(1.north west)+(-0.2,0.2)$) rectangle ($(3.south east)+(0.2,-0.2)$);
            \draw[rounded corners, dashed] ($(1.north west)+(-0.35,0.35)$) rectangle ($(5.south east)+(0.35,-0.35)$);
            \draw[rounded corners, dashed] ($(2.north west)+(-0.2,0.2)$) rectangle ($(4.south east)+(0.2,-0.2)$);
            \draw[rounded corners, dashed] ($(6.north west)+(-0.2,0.2)$) rectangle ($(6.south east)+(0.2,-0.2)$);
            \draw[rounded corners, dashed] ($(2.north west)+(-0.35,0.35)$) rectangle ($(6.south east)+(0.35,-0.75)$);

            \node at (3.2,1.5) {$T_x$};
            \node at (3.2,-0.2) {$T_y$};
            \node at (0.6,1.5) {$X$};
            \node at (0.6,-0.7) {$T_{\mathrm{True}}$};
            \node at (2.4,-0.7) {$T_{\mathrm{False}}$};
        \end{tikzpicture}
        \caption{When $b = 0$}
		\label{subfigure:bitwise_a}
    \end{subfigure}
    \begin{subfigure}[b]{0.32\textwidth}
        \centering
        \begin{tikzpicture}[scale=1]
            \node[circle, draw] (1) at (0,1.5) {1};
            \node[circle, draw] (3) at (1.2,1.5) {3};
            \node[circle, draw] (5) at (2.4,1.5) {5};
            \node[circle, draw] (2) at (0,0) {2};
            \node[circle, draw] (6) at (1.2,0) {6};
            \node[circle, draw] (4) at (2.4,0) {4};

            \draw[rounded corners, dashed] ($(3.north west)+(-0.2,0.2)$) rectangle ($(5.south east)+(0.2,-0.2)$);
            \draw[rounded corners, dashed] ($(1.north west)+(-0.35,0.35)$) rectangle ($(5.south east)+(0.35,-0.35)$);
            \draw[rounded corners, dashed] ($(2.north west)+(-0.2,0.2)$) rectangle ($(6.south east)+(0.2,-0.2)$);
            \draw[rounded corners, dashed] ($(4.north west)+(-0.2,0.2)$) rectangle ($(4.south east)+(0.2,-0.2)$);
            \draw[rounded corners, dashed] ($(2.north west)+(-0.35,0.35)$) rectangle ($(4.south east)+(0.35,-0.75)$);

            \node at (3.2,1.5) {$T_x$};
            \node at (3.2,-0.2) {$T_y$};
            \node at (1.8,1.5) {$X$};
            \node at (0.6,-0.7) {$T_{\mathrm{True}}$};
            \node at (2.4,-0.7) {$T_{\mathrm{False}}$};
        \end{tikzpicture}
        \caption{When $b = 1$}
		\label{subfigure:bitwise_b}
    \end{subfigure}
    \begin{subfigure}[b]{0.32\textwidth}
        \centering
        \begin{tikzpicture}[scale=1]
            \node[circle, draw] (1) at (0,1.5) {1};
            \node[circle, draw] (3) at (1.2,1.5) {3};
            \node[circle, draw] (5) at (2.4,1.5) {5};
            \node[circle, draw] (6) at (0,0) {6};
            \node[circle, draw] (2) at (1.2,0) {2};
            \node[circle, draw] (4) at (2.4,0) {4};

            \draw[rounded corners, dashed] ($(5.north west)+(-0.2,0.2)$) rectangle ($(5.south east)+(0.2,-0.2)$);
            \draw[rounded corners, dashed] ($(1.north west)+(-0.35,0.35)$) rectangle ($(5.south east)+(0.35,-0.35)$);
            \draw[rounded corners, dashed] ($(2.north west)+(-0.2,0.2)$) rectangle ($(4.south east)+(0.2,-0.2)$);
            \draw[rounded corners, dashed] ($(6.north west)+(-0.2,0.2)$) rectangle ($(6.south east)+(0.2,-0.2)$);
            \draw[rounded corners, dashed] ($(6.north west)+(-0.35,0.35)$) rectangle ($(4.south east)+(0.35,-0.75)$);

            \node at (3.2,1.5) {$T_x$};
            \node at (3.2,-0.2) {$T_y$};
            \node at (1.75,1.5) {$X$};
            \node at (0.0,-0.7) {$T_{\mathrm{True}}$};
            \node at (1.8,-0.7) {$T_{\mathrm{False}}$};
        \end{tikzpicture}
        \caption{When $b = 2$}
		\label{subfigure:bitwise_c}
    \end{subfigure}
    
    \caption{Example execution of the bitwise search (Lines 1 to 12) in \cref{alg:auction_sublinear} when $\S^* = \{\{1,4\},\{2,3\},\{5,6\}\}$. The vertices represent the agents, and the dashed lines represent the sets used in the algorithm. Using the first query, the algorithm identifies one agent in each coalition and groups them as $T_x = \{1, 3, 5\}$. The rest are $T_y = \{2, 4, 6\}$ (Lines 1 to 2). Then, as shown above, for each $b \in \{0,1,2\}$, the algorithm picks the set of agents in $T_x$ with the $b$-th lowest binary bit as $1$ as $X$, and partitions $T_y$ into $T_{\mathrm{True}}$, each of which is cooperating with some agent in $X$, and $T_{\mathrm{False}}$, each of which is not cooperating with any agent in $X$ (Lines 3 to 12).}
	\label{fig:bitwise}
\end{figure*}

The algorithm starts by querying the game-strategy pair $((N, \v = \mathbf 1, \r = \mathbf 0), \Sigma = \mathbf 0)$ (Line 1). Using the obtained observation, by \cref{lem:auction_first_move}, the algorithm identifies exactly one agent in each coalition, groups them into set $T_x$ and the rest of the agents into set $T_y$ (Lines 2). Essentially, set $T_x$ contains a set of agents who are definitely not in the same coalition with each other. Furthermore, for each agent $i$ in $T_y$, there is exactly one ``teammate'' of $i$ in $T_x$. We denote the index of this agent as $\alpha(i)$.

An important idea behind \cref{alg:auction_sublinear} is \textit{bitwise search}, i.e., to determine each binary bit of $\alpha(i)$ separately. To do this, for each $b \in \{0,1,\dots,\lfloor \log_2 n \rfloor\}$, the algorithm picks the set of agents in $T_x$ such that the $b$-th lowest binary bits of their index are $1$ as $X$ (Line 4), and tries to tell which agents in $T_y$ are in the same coalition as some agent in $X$ by querying auction gadgets. Intuitively, if we let $Y = T_y, Z = N \setminus (X \cup Y)$ and query $\A(X, Y, Z)$, by \cref{lem:auction_gadget}, for each coalition that contains an agent in $X$, one of its members in $Y$ will decide to deviate. Thus, we can move these agents to $Z$ like how we did in \cref{alg:auction_linear}, until no agents in $Y$ decide to deviate. The remaining agents in $Y$ are then the agents that are not in the same coalition as any agent in $X$. The algorithm leverages this idea to determine each binary bit of $\alpha(i)$ separately, and stores the results in $O^{(b)}_i$ (Lines 5 to 12). Note that for each $b$, the algorithm uses at most $1 + c$ queries, where $c$ is the size of the largest coalition in $\S^*$. We illustrate the execution of the bitwise search with an example in \cref{fig:bitwise}.

Finally, the algorithm recovers $\alpha(i)$ for each agent $i$ in $T_y$ by summing up the values of the binary bits in $O^{(b)}_i$ (Line 15), and merges $i$ with $\alpha(i)$ accordingly (Line 16).

\begin{restatable}{theorem}{auctionsublinear}
	\label{thm:auction_sublinear}
	\cref{alg:auction_sublinear} solves the Multiple-bit CSL problem with auctions in $(1 + \log_2 n)(1 + c) + 1$ rounds, where $c$ is size of the largest coalition in $\S^*$.
\end{restatable}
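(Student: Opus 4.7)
I would first apply \cref{lem:auction_first_move} to the query in Line~1 to conclude that $T_x = \{i : O_i = \mathrm{True}\}$ contains exactly one representative from each coalition of $\S^*$. For every $i \in T_y = N \setminus T_x$, let $\alpha(i)$ denote the unique index in $T_x \cap [i]_{\S^*}$; reconstructing $\alpha(i)$ for all $i \in T_y$ is enough to recover $\S^*$ via the final merging step, so the goal of the bitwise search is to decode each binary bit of $\alpha(i)$.

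Fix a bit $b$ and let $X = \{i \in T_x : \text{the } b\text{-th lowest binary bit of } i \text{ is } 1\}$. I would prove the loop invariant that at the start of each iteration of the inner repeat-until, $T_{\mathrm{False}} = \{i \in T_y : \alpha(i) \notin X\} \cup R$ for some $R \subseteq \{i \in T_y : \alpha(i) \in X\}$, initialized to $R = \{i \in T_y : \alpha(i) \in X\}$. Applying \cref{lem:auction_gadget} to the queried gadget $\A(X, T_{\mathrm{False}}, N \setminus (X \cup T_{\mathrm{False}}))$ produces exactly one deviator from each coalition $S$ with $S \cap X \neq \varnothing$ and $S \cap T_{\mathrm{False}} \neq \varnothing$ and no deviators from other coalitions. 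Hence every productive iteration strictly shrinks $R$ while leaving $\{i \in T_y : \alpha(i) \notin X\}$ untouched, so $R$ eventually empties and the loop terminates with $T_{\mathrm{False}} = \{i \in T_y : \alpha(i) \notin X\}$. Consequently $O^{(b)}_i = \ind{i \notin T_{\mathrm{False}}}$ equals the $b$-th binary bit of $\alpha(i)$, and the reconstruction $j = \sum_b 2^b O^{(b)}_i$ in Line~15 yields $j = \alpha(i)$, so the subsequent merges return $\S = \S^*$.

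\textbf{Round complexity.} Line~1 contributes one query. For each bit $b$, the inner loop runs at most $1 + c$ times: by the invariant each productive iteration strictly decreases $|R|$ by at least one, and $|R| \le c - 1$ at initialization since every coalition has at most $c$ agents with exactly one lying in $T_x$, leaving at most $c - 1$ in $T_y$; combined with the single terminating iteration this yields at most $c \le 1 + c$ queries per bit. Summing over the $\lfloor \log_2 n \rfloor + 1 \le 1 + \log_2 n$ bits and adding the initial query gives a total of at most $1 + (1 + \log_2 n)(1 + c) = (1 + \log_2 n)(1 + c) + 1$ queries.

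\textbf{Main obstacle.} The principal subtlety is that \cref{lem:auction_gadget} only specifies that \emph{some} agent per active coalition deviates on each gadget query, rather than committing to a specific one, so the algorithm cannot track the identity of individual agents within a coalition. The plan sidesteps this nondeterminism by framing the invariant purely in terms of which \emph{set} of agents must still be processed, not which specific agent is removed in each iteration. Because agents with $\alpha(i) \notin X$ are provably never removed and agents with $\alpha(i) \in X$ are provably all eventually removed, the termination condition $T_{\mathrm{True}} = \varnothing$ exactly identifies the set $\{i \in T_y : \alpha(i) \in X\}$, which is precisely the information required to extract the $b$-th bit of every $\alpha(i)$.
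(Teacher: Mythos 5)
Your proposal takes essentially the same route as the paper's proof: one application of \cref{lem:auction_first_move} to extract a representative $\alpha(i)$ of each coalition into $T_x$, then a per-bit invariant showing that agents with $\alpha(i)\notin X$ are never removed from $T_{\mathrm{False}}$ while agents with $\alpha(i)\in X$ are all eventually removed, followed by binary reconstruction and merging. The correctness argument is sound, and your handling of the nondeterminacy in \cref{lem:auction_gadget} via a set-valued invariant is exactly the right fix. The one slip is in the round-complexity accounting: the claim that $|R|\le c-1$ at initialization is false in general, because $R$ starts as the union of the $T_y$-members of \emph{every} coalition whose representative lies in $X$, and there may be many such coalitions, so $|R|$ can be close to $n$. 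Counting iterations by ``each productive iteration decreases $|R|$ by at least one'' therefore does not give the stated bound. The correct count follows from a fact you already invoke in the correctness part: by \cref{lem:auction_gadget}, each productive iteration removes exactly one agent from \emph{each} still-active coalition simultaneously, so the number of productive iterations is at most $\max_{S\in\S^*}\InParentheses{|S|-1}\le c-1$; adding the single terminating iteration yields at most $c\le 1+c$ queries per bit, and the total $(1+\log_2 n)(1+c)+1$ follows as you state.
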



\begin{proof}
	We first show the correctness of the algorithm. By \cref{lem:auction_first_move}, after Line 2, $T_x$ contains exactly one agent in each coalition. This means that for each agent $i \in T_y$, there is exactly one agent $j \in T_x$ such that $i$ and $j$ are in the same coalition. We denote this agent $j$ as $\alpha(i)$. Then, we will show two claims: (i) after Lines 3 to 12, for each $b \in \{0,1,\dots,\lfloor \log_2 n \rfloor\}$ and each $i \in T_y$, $O^{(b)}_i = \mathrm{True}$ if and only if the $b$-th lowest binary bit of $\alpha(i)$ is $1$, and (ii) after Lines 13 to 16, $\S = \S^*$.

	For (i), we only need to show that after the repeat loop in Lines 6 to 11, $T_{\mathrm{False}}$ contains exactly the agents $i \in T_y$ such that the $b$-th lowest binary bit of $\alpha(i)$ is $0$. If an agent $i \in T_y$ has the $b$-th lowest binary bit of $\alpha(i)$ as $0$, then $\alpha(i) \not\in X$ after Line 4. Thus, by \cref{lem:auction_gadget}, $O_i$ will never be $\mathrm{True}$ on Line 8, and $i \in T_{\mathrm{False}}$ after the repeat loop. If an agent $i \in T_y$ has the $b$-th lowest binary bit of $\alpha(i)$ as $1$, then $\alpha(i) \in X$ after Line 4. By \cref{lem:auction_gadget}, in each iteration of the repeat loop, one agent $j \in [i]_{\S^*}\cap T_{\mathrm{False}}$ will have $O_j = \mathrm{True}$, and will be removed from $T_{\mathrm{False}}$. Since the repeat loop terminates only when no agent is removed from $T_{\mathrm{False}}$ in one iteration, $i \not\in T_{\mathrm{False}}$ after the loop.

	For (ii), given each binary bit of $\alpha(i)$, the algorithm computes the index of the agent in $\alpha(i)$ in $T_y$ by summing up the values of the binary bits on Line 15. Then, the algorithm merges $i$ with $\alpha(i)$ accordingly on Line 16. The correctness of the algorithm then follows.

	For the complexity, the algorithm uses $1$ query on Line 1, and for each $b \in \{0,1,\dots,\lfloor \log_2 n \rfloor\}$, the algorithm uses at most $1 + c$ queries in the repeat loop in Lines 6 to 11. This shows that the total number of queries used by the algorithm is at most $(1 + \log_2 n)(1 + c) + 1$.
\end{proof}

\section{Conclusion and Future Work}
\label{sec:conclusion}

In this paper, we study the Coalition Structure Learning (CSL) problem under the multiple-bit observation oracle. We consider various settings of the type of games that the algorithm may design, including normal-form games, congestion games, graphical games, and auctions. In each of the settings, we present an algorithm that learns the coalition structure with a sublinear number of rounds. We also complement our algorithmic results with lower bounds on the number of rounds required to learn the coalition structure, demonstrating their optimality in most of the settings. Compared to the existing work, our results significantly reduce the number of rounds required to learn the coalition structure, and greatly improve the potential applicability of the CSL problem in real-world systems. Regarding CSL, there are various different settings for future work that are not covered by this paper. It would be interesting to consider the CSL problem (i) when the underlying coalition structure is dynamic, (ii) when the agents are aware of the algorithm and respond strategically, (iii) when the algorithm can only observe a subset of the agents in each round, and (iv) when the observations are noisy or the agents are boundedly rational.

\newpage

\section*{Acknowledgements}

This work was supported in part by NSF grant IIS-2046640 (CAREER) and NSF IIS-2200410.

\bibliographystyle{unsrtnat}
\bibliography{ref}

\appendix
\clearpage

\section{Multiple-bit CSL with Congestion Games}
\label{sec:congestion_games}

To state our result for congestion games, we first recall the definition of a congestion game.

\begin{definition}[\cite{rosenthal1973class}]
	\label{def:congestion_game}
	A \textbf{congestion game} $\G$ is a tuple $(N, R, (A_i)_{i\in N}, (c_i)_{r\in R})$ where $N$ is the set of agents, $R$ is the set of resources, $A_i\subseteq 2^R \setminus \{\varnothing\}$ is the strategy space of agent $i$, and $c_r:\mathbb N \to \mathbb R$ is the cost function of resource $r$. For a strategy profile $\Sigma = (\sigma_i)_{i\in N}$ where $\sigma_i\in A_i$, the utility of agent $i$ is given by $u_i(\Sigma) = -\sum_{r\in \sigma_i} c_r(|\{j\in N\mid r\in \sigma_j\}|)$.
\end{definition}

Clearly, a directed prisoner's dilemma is not a congestion game, so we cannot directly apply \cref{alg:normal_form} to solve Multiple-bit CSL with congestion games. However, as we will show in this subsection, we can design another type of game-strategy pair, \textit{directed Braess's paradox}, which is a congestion game, and functions similarly to directed prisoner's dilemmas in the context of Multiple-bit CSL. We will then show that using directed Braess's paradoxes, we can solve Multiple-bit CSL with congestion games in $\log_2 n + 2$ rounds with an algorithm similar to \cref{alg:normal_form}.

\begin{definition}
    \label{def:directed_braess_paradox}
	For $i\in N, j\in N$, a \textbf{directed Braess's paradox} $\B(i,j)$ is a game-strategy pair $(\G, \Sigma)$. Here, $\G$ is a congestion game $(N, R, (A_i)_{i\in N}, (c_i)_{r\in R})$ where $R = \{r_1,r_2,r_3\}$. The strategy spaces of agent $i$ and $j$ are $\{\{r_1\}\}$ and $\{\{r_2\},\{r_1,r_3\}\}$ respectively, and the strategy spaces for all other agents are $\{\varnothing\}$. The cost functions are $c_{r_1}(x) = x$, $c_{r_2}(x) = 2.5$ and $c_{r_3}(x) = 0$ for all $x\in \mathbb N$. The strategy profile $\Sigma$ is such that $\sigma_i = \{r_1\}$, $\sigma_j = \{r_1,r_3\}$ and $\sigma_k = \varnothing$ for all $k\in N\setminus \{i,j\}$.
\end{definition}

We illustrate the definition of directed Braess's paradox in \cref{fig:directed_braess_paradox}.

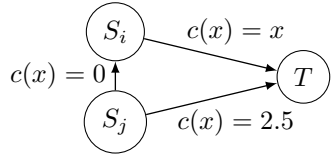
\begin{figure}[htbp]
    \centering
    \begin{tikzpicture}[scale=1]
        \node[circle, draw, minimum size=0.7cm] (i) at (0,0.6) {$S_i$};
        \node[circle, draw, minimum size=0.7cm] (j) at (0,-0.6) {$S_j$};
        \node[circle, draw, minimum size=0.7cm] (T) at (2.5,0) {$T$};
    
        \node at (1.6,-0.6) {$c(x) = 2.5$};
        \node at (1.6,0.6) {$c(x) = x$};

        \draw[->, >=latex, line width=0.5pt] (j) -- node[left] {$c(x) = 0$} (i);
        \draw[->, >=latex, line width=0.5pt] (i) -- (T);
        \draw[->, >=latex, line width=0.5pt] (j) -- (T);
    \end{tikzpicture}
    
    \caption{Illustration of the directed Braess's paradox $\B(i,j)$. Agent $i$ needs to choose a path from $S_i$ to $T$ and agent $j$ needs to choose a path from $S_j$ to $T$. The cost functions are indicated on the edges.}
	\label{fig:directed_braess_paradox}
\end{figure}

\begin{lemma}
	\label{lem:directed_braess_paradox}
	Let $i,j$ be two agents in $N$ and $\mathbf O = \O(\B(i,j))$. Then, $O_j = \mathrm{True}$ if and only if agent $i$ and agent $j$ are in the same coalition under $\S^*$. Moreover, $O_k = \mathrm{False}$ for $k\in N\setminus\{j\}$.
\end{lemma}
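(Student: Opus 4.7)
The plan is to mirror the structure of the proof of \cref{lem:directed_prisoner_dilemma}: first dispose of the agents $k \in N \setminus \{j\}$ by noting that their strategy spaces are singletons, so no deviation is possible and $O_k = \mathrm{False}$ trivially. The nontrivial content is then a case analysis on whether $i$ and $j$ share a coalition under $\S^*$, comparing the joint utility of $j$'s two available strategies $\{r_2\}$ and $\{r_1,r_3\}$ against the specified $\sigma_j = \{r_1,r_3\}$.

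In the case where $i$ and $j$ belong to different coalitions, agent $i$'s action is fixed at $\{r_1\}$ as an element of $\Sigma_{-S}$ from $j$'s coalition's viewpoint. The coalition containing $j$ effectively chooses only $j$'s action (all other members are singletons with action $\varnothing$). Playing $\{r_1,r_3\}$ incurs cost $c_{r_1}(2)+c_{r_3}(1) = 2+0 = 2$ for $j$, while playing $\{r_2\}$ incurs cost $c_{r_2}(1)=2.5$. Hence $\sigma_j = \{r_1,r_3\}$ is the unique joint best response, so $O_j = \mathrm{False}$.

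In the case where $i$ and $j$ are in the same coalition $S$, the coalition controls both of their actions, but $i$'s action is still forced to $\{r_1\}$. Under $\sigma_j = \{r_1,r_3\}$, the joint cost is $c_{r_1}(2) + c_{r_1}(2) + c_{r_3}(1) = 2+2+0 = 4$. Switching to $\sigma_j' = \{r_2\}$ gives joint cost $c_{r_1}(1) + c_{r_2}(1) = 1 + 2.5 = 3.5$, which is strictly smaller. Thus the specified strategy is not a joint best response, and every joint best response for $S$ has $j$ play $\{r_2\} \neq \sigma_j$, yielding $O_j = \mathrm{True}$. Combining the two cases gives the biconditional, and the step already observed for $k \neq j$ completes the proof.

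No step looks genuinely difficult; the only thing to be careful about is the Braess-style arithmetic in the same-coalition case, where the congestion externality on $r_1$ (cost jumping from $1$ to $2$ when both $i$ and $j$ use it) is exactly what makes the specified strategy suboptimal for the coalition while remaining optimal for $j$ alone. This is the mechanism that makes the lemma an analog of \cref{lem:directed_prisoner_dilemma}, and the rest of the congestion-game algorithm can then inherit the simultaneous-binary-search structure of \cref{alg:normal_form} via products of directed Braess's paradoxes in place of directed prisoner's dilemmas.
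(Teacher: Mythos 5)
Your proof is correct and follows essentially the same route as the paper's: a case analysis on whether $i$ and $j$ share a coalition, comparing $j$'s two strategies, plus the observation that all other agents have singleton strategy spaces. You simply spell out the cost arithmetic ($2$ vs.\ $2.5$ alone, $4$ vs.\ $3.5$ jointly) that the paper compresses into a one-line dominance claim.
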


\begin{proof}
	If agent $i$ and agent $j$ are in the same coalition under $\S^*$, then choosing $\{r_2\}$ is a dominant strategy for agent $j$'s coalition. Otherwise, choosing $\{r_1,r_3\}$ is a dominant strategy. Therefore, $O_j = \mathrm{True}$ if and only if agent $i$ and agent $j$ are in the same coalition under $\S^*$. Moreover, since all agents other than $j$ only have one choice (either $\{r_1\}$ or $\{\varnothing\}$), $O_k = \mathrm{False}$ for $k\in N\setminus\{j\}$.
\end{proof}

\cref{lem:directed_braess_paradox} shows that directed Braess's paradoxes function just like directed prisoner's dilemmas in Multiple-bit CSL. With an argument similar to the proof from \cref{lem:directed_prisoner_dilemma} to \cref{lem:product_of_directed_prisoner_dilemma}, we can also show the following corollary about the product of multiple directed Braess's paradoxes.

\begin{corollary}
	\label{cor:product_of_directed_braess_paradox}
	Let $\{(i_x,j_x)\mid x \in \{1,2,\dots,k\}\}$ be $k$ pairs of agents and $\mathbf O = \O(\prod_{x=1}^{k} \B(i_x,j_x))$. Then, $O_j = \mathrm{True}$ if and only if there exists $x\in\{1,2,\dots,k\}\textbf{ such that }j_x = j$ and $i_x\in[j]_{\S^*}$.
\end{corollary}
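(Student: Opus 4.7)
The plan is to mirror the two-step reduction used in the proof of \cref{lem:product_of_directed_prisoner_dilemma}, adapted from normal-form games to the congestion game setting. Since \cref{def:product_of_normal_form_games} is stated for normal-form games, the first thing I would do is give an explicit congestion-game realization of $\prod_{x=1}^{k} \B(i_x, j_x)$: take $k$ disjoint copies of the three-resource set used in \cref{def:directed_braess_paradox}, obtaining a combined resource pool $\bigsqcup_{x=1}^{k} \{r_1^{(x)}, r_2^{(x)}, r_3^{(x)}\}$ with the cost functions of \cref{def:directed_braess_paradox} applied to each copy. Each agent's strategy space is the Cartesian product of their strategy spaces across the $k$ factors (interpreted as a union of chosen subsets across copies), and the specified strategy profile is the concatenation of the per-factor specified strategies.

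Because the resource copies across factors are disjoint, no congestion is shared between different factors, so the total cost — and hence the utility — of any agent in the product game is exactly the sum of their costs in the $k$ individual factors. Consequently, for any coalition $S \in \S^*$, the utility of $S$ in the product game decomposes as a sum over factors, so a strategy for $S$ is a joint best response to the other agents in the product if and only if its restriction to each factor is a joint best response in that factor. In particular, agent $j$ deviates in the product if and only if $j$ deviates in at least one factor $\B(i_x, j_x)$.

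Applying \cref{lem:directed_braess_paradox} to each factor, $j$ deviates in $\B(i_x, j_x)$ precisely when $j_x = j$ and $i_x \in [j]_{\S^*}$, and moreover no agent $k \ne j_x$ ever deviates in that factor. Combining these per-factor observations gives $O_j = \mathrm{True}$ in the product exactly when some such $x$ exists, which is the desired conclusion. The only step requiring any real care — and it is mild — is verifying that the disjoint-copy construction truly is a valid congestion game whose utilities separate additively across factors; once that bookkeeping is in place, the remainder of the argument is a verbatim translation of the directed prisoner's dilemma case.
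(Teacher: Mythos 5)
Your proposal is correct and follows essentially the same route as the paper: the paper proves the corollary by appealing to the same factor-wise decomposition used for \cref{lem:product_of_directed_prisoner_dilemma} (deviation in the product iff deviation in some factor, then \cref{lem:directed_braess_paradox} per factor), and your explicit disjoint-resource-copy construction is exactly the content of the paper's \cref{lem:closedness_of_congestion_games}. The extra bookkeeping you flag is welcome but not a departure from the paper's argument.
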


To use the product of multiple directed Braess's paradoxes in an algorithm to solve Multiple-bit CSL with congestion games, we also need to show that it is still a congestion game. We will show that congestion games are closed under the product operation in the following lemma.

\begin{lemma}
	\label{lem:closedness_of_congestion_games}
	Let $(\G_1,\Sigma_1),(\G_2, \Sigma_2)$ be two game-strategy pairs and let $(\G_p,\Sigma_p) = (\G_1,\Sigma_1)\times (\G_2, \Sigma_2)$. If $\G_1$ and $\G_2$ are both congestion games on $N$, then $\G_p$ is also a congestion game on $N$.
\end{lemma}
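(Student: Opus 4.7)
The plan is to construct an explicit congestion game representation of $\G_p$ by taking the disjoint union of the resource sets of $\G_1$ and $\G_2$. Concretely, write $\G_x = (N, R_x, (A_{x,i})_{i \in N}, (c_{x,r})_{r \in R_x})$ for $x \in \{1,2\}$, and (by relabeling if necessary) assume $R_1 \cap R_2 = \varnothing$. I then propose defining $\G_p = (N, R_p, (A_{p,i})_{i \in N}, (c_{p,r})_{r \in R_p})$ where $R_p := R_1 \cup R_2$, the cost function $c_{p,r}$ equals $c_{1,r}$ if $r \in R_1$ and $c_{2,r}$ if $r \in R_2$, and the strategy space is $A_{p,i} := \{a_1 \cup a_2 : a_1 \in A_{1,i},\ a_2 \in A_{2,i}\}$.

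The next step is to argue that this is a valid congestion game and that it realizes the product game-strategy pair from \cref{def:product_of_normal_form_games}. Because $a_1 \subseteq R_1$ and $a_2 \subseteq R_2$ are both nonempty and disjoint, the map $(a_1, a_2) \mapsto a_1 \cup a_2$ is a bijection from $A_{1,i} \times A_{2,i}$ onto $A_{p,i}$, and each element of $A_{p,i}$ is a nonempty subset of $R_p$, so $A_{p,i} \subseteq 2^{R_p} \setminus \{\varnothing\}$ as required. I then identify the specified profile by setting $\sigma_{p,i} := \sigma_{1,i} \cup \sigma_{2,i}$, matching $\Sigma_p = (\sigma_{1,i}, \sigma_{2,i})_{i \in N}$ under this bijection.

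Finally, I would verify that the utilities coincide with $u_{1,i} + u_{2,i}$. For any profile $\Sigma$ in $\G_p$ whose $i$-th component decomposes as $a_{1,i} \cup a_{2,i}$, disjointness of $R_1$ and $R_2$ gives
\begin{align*}
u_{p,i}(\Sigma)
&= -\sum_{r \in a_{1,i} \cup a_{2,i}} c_{p,r}\bigl(\lvert\{j : r \in a_{1,j} \cup a_{2,j}\}\rvert\bigr) \\
&= -\sum_{r \in a_{1,i}} c_{1,r}\bigl(\lvert\{j : r \in a_{1,j}\}\rvert\bigr) - \sum_{r \in a_{2,i}} c_{2,r}\bigl(\lvert\{j : r \in a_{2,j}\}\rvert\bigr) \\
&= u_{1,i}(\Sigma_1) + u_{2,i}(\Sigma_2),
\end{align*}
which matches the utility in $\G_p$ per \cref{def:product_of_normal_form_games}. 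This completes the verification.

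I do not expect a substantive obstacle here; the content is essentially a bookkeeping exercise once one picks the right construction. The only subtle point is the need to make $R_1$ and $R_2$ formally disjoint (via relabeling) so that counting occupancy on $R_1$-resources is unaffected by what happens on $R_2$-resources, which is what allows the sum of utilities to split cleanly into a single congestion-style sum over $R_p$.
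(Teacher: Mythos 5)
Your proposal is correct and follows essentially the same route as the paper: take the (disjoint) union of the two resource sets, identify each product strategy with the union of its two components, and observe that the additive utility of the product game is exactly a congestion-game cost sum over the combined resources. The paper's version is terser and leaves the disjointness/relabeling and the bijection $A_{1,i}\times A_{2,i}\leftrightarrow A_{p,i}$ implicit, so your write-up is just a more careful rendering of the same argument.
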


\begin{proof}
	Let $\G_1 = (N, R_1, (A_{1,i})_{i\in N}, (c_{1,i})_{r\in R_1})$ and $\G_2 = (N, R_2, (A_{2,i})_{i\in N}, (c_{2,i})_{r\in R_2})$. We will write $\G_p$ as a congestion game. According to \cref{def:product_of_normal_form_games}, in $\G_p$, the strategy space of agent $i$ is $A_i = A_{1,i}\times A_{2,i}$. Moreover, given a strategy profile $\Sigma = (\sigma_{1,i},\sigma_{2,i})_{i\in N}$, the utility of agent $i$ is given by $u_i(\Sigma) = -\sum_{r\in \sigma_{1,i}} c_{1,r}(|\{j\in N\mid r\in \sigma_{1,j}\}|)-\sum_{r\in \sigma_{2,i}} c_{2,r}(|\{j\in N\mid r\in \sigma_{2,j}\}|)$. This shows that $\G_p = (N, (R_1,R_2), (A_{i})_{i\in N}, ((c_{1,i})_{r\in R_1}, \\ (c_{2,i})_{r\in R_2}))$ is a congestion game on $N$.
\end{proof}

With \cref{cor:product_of_directed_braess_paradox,lem:closedness_of_congestion_games}, we solve Multiple-bit CSL with congestion games just like we did with normal-form games in \cref{alg:normal_form}. 
The only difference is that we query directed Braess's paradoxes instead of directed prisoner's dilemmas. We present the algorithm below. 

\IncMargin{1.0em}
\begin{algorithm}[ht]

	\Indmm\Indmm
	\KwIn{The number of agents $n$ and the multiple-bit observation oracle $\O$.}
	\KwOut{A coalition structure $\S$ of the agents.}
	\Indpp\Indpp
	
	Query $\prod_{i\in N,j\in N, i<j}\B(i,j)$ and observe $\mathbf O$\;
	Let $T_j \gets \{1,2,\dots,j-1\}\textbf{ if } O_j=\mathrm{True}\textbf{ else }\varnothing$ $\textbf{ for each }j\in N$\;
	\While{$\exists j \in N\textbf{ such that }|T_j|\geq 2$}{
		Let $L_j \gets \{\text{the smallest $\lfloor\frac{|T_j|}2\rfloor$ elements in $T_j$}\}$ $\textbf{ for each }j\in N$\;
		Let $R_j \gets T_j \setminus L_j$\textbf{ for each }$j\in N$\;
		Query $\prod_{j\in N,i\in L_j}\B(i,j)$ and observe $\mathbf O$\;
		Let $T_j \gets L_j\textbf{ if } O_j=\mathrm{True}\textbf{ else } R_j $ $\textbf{ for each }j\in N$\;
	}
	Let $\S\gets\{\{1\},\{2\},\dots,\{n\}\}$\;
	\For{$j\in N\textbf{ such that }|T_j|\ne \varnothing$}{
		Let $i\gets\text{the only element in $T_j$}$\;
		Merge $[i]_\S$ and $[j]_\S$ in $\S$\;
	}
	\Return{$\S$}\;
	\caption{Simultaneous Binary Search with Congestion Games}
	\label{alg:congestion}
\end{algorithm}
\DecMargin{1.0em}

\begin{restatable}{theorem}{congestion}
	\label{thm:congestion}
	Algorithm \ref{alg:congestion} solves Multiple-bit CSL with congestion games in $\log_2 n + 2$ rounds.
\end{restatable}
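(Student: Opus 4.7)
The plan is to mirror the proof of \cref{thm:normal_form} almost verbatim, substituting \cref{cor:product_of_directed_braess_paradox} for \cref{lem:product_of_directed_prisoner_dilemma} and using \cref{lem:closedness_of_congestion_games} to certify that every game-strategy pair we query is in fact a congestion game. Algorithmically, \cref{alg:congestion} is identical in structure to \cref{alg:normal_form}: it replaces each atomic building block $\P(i,j)$ with $\B(i,j)$, and takes products of these building blocks via \cref{def:product_of_normal_form_games}. Since \cref{cor:product_of_directed_braess_paradox} gives exactly the same observational semantics as \cref{lem:product_of_directed_prisoner_dilemma} (agent $j$ deviates iff some outgoing neighbor in the queried multigraph lies in $[j]_{\S^*}$), every logical step of the correctness argument transfers directly.

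Concretely, I would first verify that all queries $\prod_{(i,j)} \B(i,j)$ arising in Lines 1 and 6 are legitimate congestion games on $N$. This follows by induction on the number of factors using \cref{lem:closedness_of_congestion_games}: the base case is \cref{def:directed_braess_paradox}, and each product of two congestion games is again a congestion game. Then I would prove the invariant that, after Lines 1--7, for every $j\in N$ either (i) $T_j=\varnothing$ and $j$ is the minimum-indexed agent in $[j]_{\S^*}$, or (ii) $T_j$ contains precisely the minimum-indexed element of $[j]_{\S^*}$. The base step uses \cref{cor:product_of_directed_braess_paradox} on Line 1 to deduce $O_j=\mathrm{True}$ iff $[j]_{\S^*}\cap\{1,\dots,j-1\}\ne\varnothing$, exactly as in the normal-form case; the inductive step uses the same corollary to show that bisecting $T_j$ into $L_j,R_j$ and querying $\prod_{j,i\in L_j}\B(i,j)$ tells us via bit $O_j$ whether the sought-after minimum index lies in $L_j$ or $R_j$.

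Given this invariant, the merging loop in Lines 8--11 correctly reconstructs $\S^*$ by the same argument as in the proof of \cref{thm:normal_form}: each non-minimum-indexed agent is merged into the coalition of the minimum-indexed agent of its coalition, which is enough to recover the partition. For the round complexity, every $T_j$ with $|T_j|\ge 2$ at least halves in size after one iteration of the while loop, so the loop terminates in at most $\lceil\log_2 n\rceil$ iterations; adding the one query on Line 1 yields a total of at most $\lceil\log_2 n\rceil + 1 \le \log_2 n + 2$ rounds.

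The main obstacle, and the only place where the proof is not entirely mechanical, is confirming that the product construction preserves the congestion-game structure, since \cref{def:directed_braess_paradox} introduces a nontrivial resource/strategy topology; this is exactly what \cref{lem:closedness_of_congestion_games} delivers, by concatenating the disjoint resource sets and summing the per-resource cost functions. Once this closure property is invoked, the remainder of the proof is a direct transcription of the proof of \cref{thm:normal_form}.
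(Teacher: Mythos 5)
Your proposal matches the paper's own proof essentially step for step: it establishes the same $T_j$ invariant via \cref{cor:product_of_directed_braess_paradox}, certifies that all queried products are congestion games via \cref{lem:closedness_of_congestion_games} (you merely make the induction over factors explicit), and derives the same $\lceil\log_2 n\rceil + 1 \le \log_2 n + 2$ round count. The argument is correct and takes the same route as the paper.
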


\begin{proof}
	The proof is almost identical to the proof of \cref{thm:normal_form}. We first show the correctness of the algorithm. To do this, we will show three claims: (i) after Lines 1 to 7,  for each $j\in N$, if $j$ is the agent with the smallest index in $[j]_{S^*}$, $T_j=\varnothing$, otherwise, $T_j$ contains only the smallest index of the agents in $[j]_{S^*}$; (ii) after Lines 8 to 11, $\S = \S^*$; (iii) the games queried in the algorithm are congestion games. (iii) is implied by \cref{lem:closedness_of_congestion_games}, so we only need to show (i) and (ii).
	
	For (i), if $j$ is the agent with the smallest index in $[j]_{S^*}$, then by \cref{cor:product_of_directed_braess_paradox}, $O_j = \mathrm{False}$, and $T_j = \varnothing$ on Line 2. Throughout the algorithm, $T_j$ remains $\varnothing$. Otherwise, $T_j$ contains the smallest index of the agents in $[j]_{S^*}$ after Line 2. In the while loop, $T_j$ is updated to $L_j$ if $O_j = \mathrm{True}$, and to $R_j$ otherwise. Since $L_j$ contains the smallest $\lfloor|T_j|/2\rfloor$ elements in $T_j$, by \cref{cor:product_of_directed_braess_paradox}, $O_j = \mathrm{True}$ if and only if the smallest index of the agents in $[j]_{S^*}$ is in $L_j$. Therefore, after one iteration of the loop, $T_j$ still contains only the smallest index of the agents in $[j]_{S^*}$, while the size of $T_j$ is halved. The loop terminates after $|T_j|=1$, and $T_j$ contains only the smallest index of the agents in $[j]_{S^*}$.

	For (ii), since every agent $j$ is either the agent with the smallest index in $[j]_{S^*}$ or merged with that agent, each coalition is merged in $\S$ after Lines 8 to 11. Thus $\S$ becomes the same as $\S^*$.

	Next, we show the complexity of the algorithm. The while loop in Lines 3 to 7 runs at most $\lceil\log_2 n\rceil$ times, and each iteration requires $1$ query. Together with the query on Line 1, the total number of queries is at most $\lceil\log_2 n\rceil + 1 \leq \log_2 n + 2$.
\end{proof}

\section{Multiple-bit CSL with Graphical Games}
\label{sec:graphical_games}

We then proceed to the case of Multiple-bit CSL with graphical games. Recall that in a graphical game \cite{kearns2001graphical}, each agent $i$ is associated with a vertex in an undirected graph $G$. We use $N_i \subseteq N$ to denote the set of neighbors of agent $i$ in $G$, including agent $i$ itself.

\begin{definition}
	\label{def:graphical_game}
	A \textbf{graphical game} $\G$ is a tuple $(N, G, (A_i)_{i\in N}, (M_i)_{i\in N})$ where $N$ is the set of agents, $G$ is an undirected graph, $A_i$ is the strategy space of agent $i$, and $M_i: \prod_{j\in N_i} A_j \to \mathbb R$ is the \textbf{local game matrix} of agent $i$. For a strategy profile $\Sigma = (\sigma_i)_{i\in N}$ where $\sigma_i\in A_i$, the utility of agent $i$ is given by $u_i(\Sigma) = M_i(\Sigma_{N_i})$, where $\Sigma_{N_i} = (\sigma_j)_{j\in N_i}$ consists only of the strategies of agents in $N_i$. Here, $N_i$ denotes the set of neighbors of agent $i$ in undirected graph $G$.
\end{definition}

If we let the undirected graph $G$ be a complete graph, then we recover the normal-form game setting. However, the reason why we consider Multiple-bit CSL with graphical games is that graphical games are compact representations of normal-form games. When the degree of each vertex in the graph is small, the graphical game representation is far succincter than the normal-form game representation. Therefore, we consider in this section the case where the algorithm can only design degree-$d$ graphical games, where the degree of each vertex in the graph is at most $d$. In this case, we will need at least $\lceil\frac{n-1}{d}\rceil$ rounds to solve the Multiple-bit CSL problem. To prove this claim formally, we first show with the following lemma that in some cases, the observations reveal no information about whether a certain pair of agents are in the same coalition under $\S^*$. 

	\begin{lemma}
		\label{lem:indistinguishable_graphical}
		Let $i,j$ be two agents in $N$ and let $\S_0 = \{\{1\},\{2\},\dots,\{n\}\}, \S_1 = \S_0\cup\{\{i,j\}\}\setminus\{\{i\},\{j\}\}$ be two coalition structures. If we query the oracle for a graphical game $\G$ where edge $(i,j)$ is not in the undirected graph $G$, then the observation reveals no information about whether $\S^* = \S_0$ or $\S^* = \S_1$. 
	\end{lemma}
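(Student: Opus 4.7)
The plan is to exploit the graphical structure of $\G$. Since the edge $(i,j)$ is absent from $G$, neither agent is a neighbor of the other, so by \cref{def:graphical_game} the local utility $u_i(\Sigma) = M_i(\Sigma_{N_i})$ does not depend on $\sigma_j$, and symmetrically $u_j$ does not depend on $\sigma_i$. Consequently, once $\Sigma_{-\{i,j\}}$ is fixed, the joint utility $u_i + u_j$ that coalition $\{i,j\}$ would maximize under $\S_1$ separates into a sum of two univariate functions, one of $\sigma_i$ alone and one of $\sigma_j$ alone. If we let $A_i$ and $A_j$ denote the sets of individual best responses of $i$ and $j$ against $\Sigma_{-i}$ and $\Sigma_{-j}$, the joint best-response set of $\{i,j\}$ under $\S_1$ is exactly the Cartesian product $A_i \times A_j$. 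This decomposition is the key structural fact driving the entire argument.

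First I would handle the agents $k \in N \setminus \{i,j\}$. Under both $\S_0$ and $\S_1$, the coalition containing $k$ is the singleton $\{k\}$, and the strategies of every agent outside $\{k\}$ are the same specified $\Sigma$. Hence the best-response computation and any ensuing tiebreaking for $\{k\}$ are identical in the two worlds, so $O_k$ agrees between $\S_0$ and $\S_1$.

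The bulk of the argument is to show that the adversary resolving ties among coalition best responses under $\S_1$ can always pick a tiebreaking that reproduces the observation bits that $\S_0$ yields for $i$ and $j$. Under $\S_0$, since $\{i\}$ is a singleton, $O_i = \mathrm{False}$ iff $\sigma_i \in A_i$, and symmetrically for $j$. Under $\S_1$, by the decomposition above, $(\sigma_i,\sigma_j)$ is a joint best response iff $\sigma_i \in A_i$ and $\sigma_j \in A_j$. When both hold, no deviation occurs and $O_i = O_j = \mathrm{False}$, matching $\S_0$. When neither holds, the coalition must deviate to some element of $A_i \times A_j$, which is forced to differ from $\sigma_i$ in the first coordinate and from $\sigma_j$ in the second, so $O_i = O_j = \mathrm{True}$, again matching. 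When exactly one holds, say $\sigma_i \in A_i$ but $\sigma_j \notin A_j$, the adversary may select the joint best response $(\sigma_i, \sigma^*_j)$ for any $\sigma^*_j \in A_j$ (necessarily distinct from $\sigma_j$); this yields $O_i = \mathrm{False}$ and $O_j = \mathrm{True}$, matching $\S_0$ once more.

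The main obstacle will be this last mixed case: we need the worst-case tiebreaking rule for the coalition $\{i,j\}$ under $\S_1$ to be \emph{allowed} to retain $\sigma_i$ on its first coordinate while deviating only on its second. This is precisely what the product decomposition guarantees, since $(\sigma_i,\sigma^*_j) \in A_i \times A_j$ is a legitimate joint best response. Once this step is in place, the observation obtained under $\S_1$ can always be chosen to coincide with the observation under $\S_0$, so no query to a graphical game without edge $(i,j)$ can distinguish the two coalition structures.
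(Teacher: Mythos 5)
Your proof is correct and takes essentially the same route as the paper: both rest on the observation that, since $(i,j)$ is not an edge of $G$, the joint best-response set of the coalition $\{i,j\}$ under $\S_1$ factors as the Cartesian product of the individual best-response sets under $\S_0$. Your explicit case analysis of the mixed case (where exactly one of $\sigma_i,\sigma_j$ is already a best response) and of the worst-case tiebreaking is in fact slightly more careful than the paper's one-line conclusion, but it is the same argument.
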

	
	\begin{proof}
		Let $\G = (N, G, (A_i)_{i\in N}, (M_i)_{i\in N})$ and let $\Sigma = (\sigma_i)_{i\in N}$ be the specified strategy profile in $\G$. Consider agent $i$'s deviation decision $\O_i(\G, \Sigma)$. Let $\BR_{i,\S_0}(\{i\},\Sigma)$ and $\BR_{j,\S_0}(\{j\},\Sigma)$ be the set of best responses of coalitions $\{i\}$ and $\{j\}$ to $\Sigma_{-i}$ and $\Sigma_{-j}$ if $\S^* = \S_0$ respectively. Since the local game matrix $M_i$ only depends on $\Sigma_{N_i}\not \ni j$, no matter what strategy agent $j$ plays, the best response of $\{i\}$ to $\Sigma_{-i}$ is the same. Therefore, the set of best responses of coalition $\{i,j\}$ if $\S^* = \S_1$ is $\BR_{i,\S_0}(\{i\},\Sigma) \times \BR_{j,\S_0}(\{j\},\Sigma)$. This shows that whether $\S^* = \S_0$ or $\S^* = \S_1$ does not affect the deviation decision of agent $i$. The same argument applies to agent $j$.
	\end{proof}

\begin{theorem}
	\label{thm:lower_bound_graphical}
	Any algorithm that solves the Multiple-bit CSL problem with degree-$d$ graphical games requires at least $\lceil\frac{n-1}{d}\rceil$ rounds of interactions with the agents in the worst case.
\end{theorem}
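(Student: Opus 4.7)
The plan is to prove the lower bound by a pigeonhole argument combined with Lemma~\ref{lem:indistinguishable_graphical}, focusing on the neighborhood of a single fixed agent (say agent $1$) across all rounds. The guiding intuition is that, by Lemma~\ref{lem:indistinguishable_graphical}, a single query of a graphical game $\G$ reveals nothing about whether agents $i$ and $j$ are in the same coalition unless the edge $(i,j)$ is present in $\G$'s underlying graph. Extending this to adaptive algorithms running for multiple rounds, if across every round the edge $(1,j)$ is absent, then the sequence of observations is identical under the two coalition structures $\S_0=\{\{1\},\ldots,\{n\}\}$ and $\S_1=\S_0\cup\{\{1,j\}\}\setminus\{\{1\},\{j\}\}$, so the algorithm cannot possibly distinguish them.

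First, I would make the multi-round indistinguishability argument precise. Fix an algorithm and consider the execution when $\S^*=\S_0$. The algorithm produces a (possibly adaptive) sequence of degree-$d$ graphical games $\G_1,\G_2,\ldots,\G_k$; applying Lemma~\ref{lem:indistinguishable_graphical} round by round, I can argue by induction on $t$ that if $(1,j)$ is absent from all of $\G_1,\ldots,\G_t$, then the observations $\O(\G_1),\ldots,\O(\G_t)$ are exactly the same whether $\S^*=\S_0$ or $\S^*=\S_1$. Consequently the very game $\G_{t+1}$ the algorithm chooses to play next is also identical in both worlds, which lets the induction go through.

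Next comes the counting step. In each round, the underlying graph has maximum degree $d$, so agent $1$ has at most $d$ neighbors in that round. Over $k$ rounds, therefore, at most $kd$ distinct agents $j\neq 1$ can ever appear as a neighbor of agent $1$. If $k<\lceil (n-1)/d\rceil$, i.e.\ $kd<n-1$, then by the pigeonhole principle there exists some $j^\star\in N\setminus\{1\}$ that is never adjacent to agent $1$ in any queried game. The indistinguishability claim above then gives identical observation sequences under $\S_0$ and under $\S_1$ (instantiated with $j=j^\star$), forcing the algorithm to output the same coalition structure in both cases; but the two correct answers differ, contradicting correctness.

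The main obstacle I anticipate is not the counting but formalizing indistinguishability for \emph{adaptive} algorithms; the single-round statement in Lemma~\ref{lem:indistinguishable_graphical} must be carefully lifted to the full transcript so that both the sequence of queried games and the sequence of observations coincide across the two worlds. Once that inductive formulation is written down cleanly, the rest is the degree bound $kd\ge n-1$ and the consequent inequality $k\ge\lceil(n-1)/d\rceil$, which completes the proof.
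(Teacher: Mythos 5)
Your proposal is correct and follows essentially the same route as the paper: invoke Lemma~\ref{lem:indistinguishable_graphical}, observe that all $n-1$ edges incident to agent $1$ must eventually appear in some queried graph, and divide by the degree bound $d$ per round. Your explicit inductive treatment of indistinguishability for adaptive algorithms is a more careful write-up of a step the paper's proof leaves implicit, but it is not a different argument.
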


\begin{proof}
	Suppose that the algorithm has queried the observation oracle $k$ times with game-strategy pairs $\{(\G_1,\Sigma_1),(\G_2,\Sigma_2),\dots,(\G_k,\Sigma_k)\}$, where $\G_i = (N, G_i, (A_i)_{i\in N}, (M_i)_{i\in N})$ is a degree-$d$ graphical game. 
	According to \cref{lem:indistinguishable_graphical}, if the observations always behave as if $\S^* = \{\{1\},\{2\},\dots,\{n\}\}$, then the algorithm cannot finalize the answer until for each pair of agents $i,j$, the algorithm has queried the oracle for a graphical game where edge $(i,j)$ is in the undirected graph. Since the degree of each vertex in the graph is at most $d$, the algorithm needs to query the oracle for at least $\lceil\frac{n-1}{d}\rceil$ different graphical games to include all edges adjacent to vertex $1$. The theorem follows.
\end{proof}

Then, we present an algorithm that solves the Multiple-bit CSL problem with degree-$d$ graphical games in $\frac{2n}{d}+2\log_2d + 1$ rounds of interactions with the agents in the worst case. Note that $d\leq n$, given the lower bounds in \cref{thm:lower_bound,thm:lower_bound_graphical}, the algorithm is optimal up to a constant factor.

The algorithm uses the product of multiple directed prisoner's dilemmas to query the observation oracle. We will first show that such a product can be represented as a degree-$d$ graphical game.

\begin{lemma}
	\label{lem:degree_of_product_of_directed_prisoner_dilemma}
	Let $\{(i_x,j_x)\mid x \in \{1,2,\dots,k\}\}$ be $k$ pairs of agents and $(\G_p,\Sigma_p) = \prod_{x=1}^{k} \P(i_x,j_x)$. Then, $\G_p$ can be written as a degree-$d$ graphical game, where $d = \max\{\sum_{x=1}^k(\ind{i = i_x} + \ind{i = j_x})\mid i\in N\}$, i.e., the maximum number of occurrences of any agent in the pairs.
\end{lemma}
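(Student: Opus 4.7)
The plan is to construct an explicit graphical game representation of $\G_p$ and then verify that the degree bound holds. First I would define the undirected graph $G$ on vertex set $N$ by placing an edge $\{i,j\}$ with $i \ne j$ whenever some pair $(i_x, j_x)$ satisfies $\{i_x, j_x\} = \{i,j\}$. The neighborhood $N_i$ is then $\{i\}$ together with all agents that co-appear with $i$ in some pair. I would take agent $i$'s strategy space in the graphical game to be the same product strategy space $A_i = A_{1,i} \times \cdots \times A_{k,i}$ that it already has in $\G_p$.

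Next I would verify that agent $i$'s utility in $\G_p$ depends only on the actions of agents in $N_i$. By \cref{def:product_of_normal_form_games}, this utility equals the sum over $x$ of its utility in the factor $\P(i_x, j_x)$. By \cref{def:directed_prisoner_dilemma}, the per-factor utility to $i$ is identically zero whenever $i \notin \{i_x, j_x\}$. For indices $x$ with $i \in \{i_x, j_x\}$, the factor utility depends only on the actions of $i_x$ and $j_x$, both of which lie in $N_i$ by construction of $G$. Hence the full sum is a function of $\Sigma_{N_i}$ alone, and I can package it as a local game matrix $M_i$ on $\prod_{j \in N_i} A_j$. This gives a valid graphical game with the same utilities as $\G_p$.

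Finally I would verify the degree bound. Every distinct neighbor of $i$ in $G$ arises from at least one pair that contains $i$ as an endpoint, so the number of distinct neighbors satisfies
\[
|N_i| - 1 \;\le\; \left|\{x : i \in \{i_x, j_x\}\}\right| \;\le\; \sum_{x=1}^{k}\bigl(\ind{i = i_x} + \ind{i = j_x}\bigr).
\]
Taking the maximum over $i \in N$ yields a degree bound of at most $d$, as required.

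The main point to be careful about is that the lemma's quantity $d$ counts appearances with multiplicity, whereas graph degree counts distinct neighbors; the inequality goes in the right direction because each pair containing $i$ contributes at most one neighbor to $i$, but this should be stated explicitly rather than asserted as an equality. Otherwise the construction is essentially bookkeeping: the zero-utility property of non-endpoint agents in each prisoner's dilemma is precisely what makes each factor ``local,'' and locality is preserved by summing utilities over factors.
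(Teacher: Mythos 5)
Your proposal is correct and follows essentially the same route as the paper: both build the graph whose edge set is the set of queried pairs, use the fact that non-endpoint agents receive identically zero utility in each directed prisoner's dilemma to establish locality, and sum over factors. Your explicit remark that the multiplicity count upper-bounds the number of distinct neighbors is a small point of care the paper leaves implicit, but it is the same argument.
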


\begin{proof}
	For two agents $i\in N, j\in N$, if $(i,j)$ does not occur in any of the pairs, then, according to \cref{def:directed_prisoner_dilemma}, the utility of agent $i$ does not depend on agent $j$'s strategy in any of the directed prisoner's dilemmas $\P(i_x,j_x) \mid x\in \{1,2,\dots, k\}$. Since for each agent, the utility in the product game is the sum of the utilities in the factor games, the utility of agent $i$ in the product game also does not depend on agent $j$'s strategy. This shows that $\G_p$ can be represented as a graphical game with the graph's edge set being $\{(i_x,j_x)\mid x \in \{1,2,\dots,k\}\}$. The lemma then follows.
\end{proof}

We then present the algorithm as \cref{alg:graphical}.

\IncMargin{1.0em}
\begin{algorithm}[ht]

	\Indmm\Indmm
	\KwIn{The number of agents $n$, the degree limit $d$ and the multiple-bit observation oracle $\O$.}
	\KwOut{A coalition structure $\S$ of the agents.}
	\Indpp\Indpp
	
	Let $size \gets \lfloor\frac{d}{2}\rfloor$ and $cnt \gets \lceil \frac{n}{size} \rceil$\;
	Let $belong_j \gets \lfloor\frac{j - 1}{size}\rfloor\textbf{ for each }j \in N$\;
	\For{$\delta \in \{0,1,\dots,cnt-1\}$}{
		Query $\prod_{i\in N,j\in N, i<j,belong_j-belong_i=\delta}\P(i,j)$ and observe $\mathbf O^{(\delta)}$\;
	}
	Let $S_j \gets \{\delta \mid O^{(\delta)}_j = \mathrm{True}\}\textbf{ for each }j \in N$\;
	Let $\delta_j \gets -1\textbf{ if } S_j=\varnothing\textbf{ else }\min(S_j)$ $\textbf{ for each }j\in N$\;
	Let $T_j \gets \{i \mid i < j, belong_j - belong_i = \delta_j\}\textbf{ if } \delta_j \ne -1\textbf{ else }\varnothing\textbf{ for each }j\in N$\;
	\While{$\exists j \in N\textbf{ such that }\delta_j = 0, |T_j|\geq 2$}{
		Let $L_j \gets \{\text{the smallest $\lfloor\frac{|T_j|}2\rfloor$ elements in $T_j$}\}$\textbf{ for each }$j\in N,\delta_j = 0$\;
		Let $R_j \gets T_j \setminus L_j$\textbf{ for each }$j\in N,\delta_j = 0$\;
		Query $\prod_{j\in N,\delta_j = 0,i\in R_j}\P(i,j)$ and observe $\mathbf O$\;
		Let $T_j \gets R_j\textbf{ if } O_j=\mathrm{True}\textbf{ else } L_j $ $\textbf{ for each }j\in N,\delta_j = 0$\;
	}
	\While{$\exists j \in N\textbf{ such that }\delta_j \geq 1, |T_j|\geq 2$}{
		Let $L_j \gets \{\text{the smallest $\lfloor\frac{|T_j|}2\rfloor$ elements in $T_j$}\}$\textbf{ for each }$j\in N,\delta_j \geq 1$\;
		Let $R_j \gets T_j \setminus L_j$\textbf{ for each }$j\in N,\delta_j \geq 1$\;
		Query $\prod_{j\in N,\delta_j \geq 1,i\in R_j}\P(i,j)$ and observe $\mathbf O$\;
		Let $T_j \gets R_j\textbf{ if } O_j=\mathrm{True}\textbf{ else } L_j $ $\textbf{ for each }j\in N,\delta_j \geq 1$\;
	}
	Let $\S\gets\{\{1\},\{2\},\dots,\{n\}\}$\;
	\For{$j\in N\textbf{ such that }|T_j|\ne \varnothing$}{
		Let $i\gets\text{the only element in $T_j$}$\;
		Merge $[i]_\S$ and $[j]_\S$ in $\S$\;
	}
	\Return{$\S$}\;
	\caption{Block Decomposition and Simultaneous Binary Search with Graphical Games}
	\label{alg:graphical}
\end{algorithm}
\DecMargin{1.0em}

Compared with the normal-form game setting, with degree-$d$ graphical games, the algorithm can only query products of multiple directed prisoner's dilemmas subject to the degree constraint $d$ as described in \cref{lem:degree_of_product_of_directed_prisoner_dilemma}. To comply with this constraint, \cref{alg:graphical} introduces an important idea, \textit{block decomposition}, i.e., to partition the agents into blocks of size $\lfloor\frac{d}{2}\rfloor$ and choose pairs to query according to the decomposition (Lines 1 to 2). As long as for each query, each agent is involved in directed prisoner's dilemmas with at most two blocks of agents, the constraint is not violated. 

The general idea of \cref{alg:graphical} is to find for each agent $j$, the ``predecessor'' of $j$, i.e., the agent with the largest index that is smaller than $j$ in $[j]_{\S^*}$. To do this, the algorithm takes a two-step approach. First, the algorithm tries to find which block the predecessor of each agent belongs to (Lines 3 to 6). We illustrate this process with an example in \cref{fig:decomposition}. The algorithm enumerates \textit{block index gap} $\delta$ and constructs one query for each $\delta$. The query for a given $\delta$ includes $\P(i,j)$ of all pairs of agents $i, j$ such that the difference between the block indices they belong to, i.e., $belong_j - belong_i$, is $\delta$ (Lines 3 to 4). In this way, an agent $j$ will be involved in directed prisoner's dilemmas with agents in at most two blocks, block $belong_j + \delta$ and block $belong_j - \delta$ in one query. Moreover, after the enumeration, since the algorithm observes for each agent $i$ and each block before $belong_i$, whether there is an agent in the same coalition of $i$ in that block, it gets to determine the block index of each agent's predecessor or conclude that it has no predecessors (Lines 5 to 6). 

Then, the algorithm carries out a simultaneous binary search to find the predecessor of each agent (Lines 7 to 17). The algorithm first deals with the agents whose predecessors are in the same block as themselves (Lines 7 to 12). In this process, each agent $j$ will be involved in games only within its own block in the queries. Next, the algorithm deals with the agents whose predecessors are in a different block (Lines 13 to 17). In this process, each agent $j$ will be involved in games with its predecessor's block, and agents outside block $belong_j$ whose predecessors are in block $belong_j$. The number of such agents is also at most twice the block size.

Finally, the algorithm merges each agent with its predecessor (Lines 18 to 21).

\begin{figure*}[htbp]
    \centering
    
    \begin{subfigure}[b]{0.32\textwidth}
        \centering
        \begin{tikzpicture}[scale=1]
            \node[circle, draw] (1) at (0,1.2) {1};
            \node[circle, draw] (3) at (1.5,1.7) {3};
            \node[circle, draw] (5) at (3,1.2) {5};
            \node[circle, draw] (2) at (0,0) {2};
            \node[circle, draw] (4) at (1.5,-0.5) {4};
            \node[circle, draw] (6) at (3,0) {6};

            \node at (0,1.8) {Block $0$};
            \node at (1.5,2.3) {Block $1$};
            \node at (3,1.8) {Block $2$};

            \draw[->, >=latex, line width=0.5pt] (2) -- (1);
            \draw[->, >=latex, line width=0.5pt] (4) -- (3);
            \draw[->, >=latex, line width=0.5pt] (6) -- (5);
        \end{tikzpicture}
        \caption{When $\delta = 0$}
		\label{subfigure:decomposition_a}
    \end{subfigure}
    \begin{subfigure}[b]{0.32\textwidth}
        \centering
        \begin{tikzpicture}[scale=1]
            \node[circle, draw] (1) at (0,1.2) {1};
            \node[circle, draw] (3) at (1.5,1.7) {3};
            \node[circle, draw] (5) at (3,1.2) {5};
            \node[circle, draw] (2) at (0,0) {2};
            \node[circle, draw] (4) at (1.5,-0.5) {4};
            \node[circle, draw] (6) at (3,0) {6};

            \node at (0,1.8) {Block $0$};
            \node at (1.5,2.3) {Block $1$};
            \node at (3,1.8) {Block $2$};

            \draw[->, >=latex, line width=0.5pt] (3) -- (1);
            \draw[->, >=latex, line width=0.5pt] (3) -- (2);
            \draw[->, >=latex, line width=0.5pt] (4) -- (1);
            \draw[->, >=latex, line width=0.5pt] (4) -- (2);
            \draw[->, >=latex, line width=0.5pt] (5) -- (3);
            \draw[->, >=latex, line width=0.5pt] (5) -- (4);
            \draw[->, >=latex, line width=0.5pt] (6) -- (3);
            \draw[->, >=latex, line width=0.5pt] (6) -- (4);
        \end{tikzpicture}
        \caption{When $\delta = 1$}
		\label{subfigure:decomposition_b}
    \end{subfigure}
    \begin{subfigure}[b]{0.32\textwidth}
        \centering
        \begin{tikzpicture}[scale=1]
            \node[circle, draw] (1) at (0,1.2) {1};
            \node[circle, draw] (3) at (1.5,1.7) {3};
            \node[circle, draw] (5) at (3,1.2) {5};
            \node[circle, draw] (2) at (0,0) {2};
            \node[circle, draw] (4) at (1.5,-0.5) {4};
            \node[circle, draw] (6) at (3,0) {6};

            \node at (0,1.8) {Block $0$};
            \node at (1.5,2.3) {Block $1$};
            \node at (3,1.8) {Block $2$};

            \draw[->, >=latex, line width=0.5pt] (5) -- (1);
            \draw[->, >=latex, line width=0.5pt] (5) -- (2);
            \draw[->, >=latex, line width=0.5pt] (6) -- (1);
            \draw[->, >=latex, line width=0.5pt] (6) -- (2);
        \end{tikzpicture}
        \caption{When $\delta = 2$}
		\label{subfigure:decomposition_c}
    \end{subfigure}
    
    \caption{Example execution of Lines 1 to 6 in \cref{alg:graphical} when $n = 6$ and $size = 2$. The vertices represent the agents and the edges represent the directed prisoner's dilemmas that the algorithm queries for each $\delta = \{0,1,2\}$. The algorithm first partitions the agents into $cnt = \lceil\frac{n}{size}\rceil = 3$ blocks, each containing at most $size = 2$ agents (Lines 1 to 2). Then, for each $\delta = \{0,1,2\}$, the algorithm queries the oracle for the directed prisoner's dilemmas that correspond to the edges shown in the figure (Lines 3 to 4). Using the observations, the algorithm then determines for each agent $j$, $\delta_j = belong_j - belong_i$, where $i$ is $j$'s predecessor in $[j]_{\S^*}$ (Lines 5 to 6). Note that in each query, any agent $j$ is involved in at most $2size\leq d$ games.}
	\label{fig:decomposition}
\end{figure*}
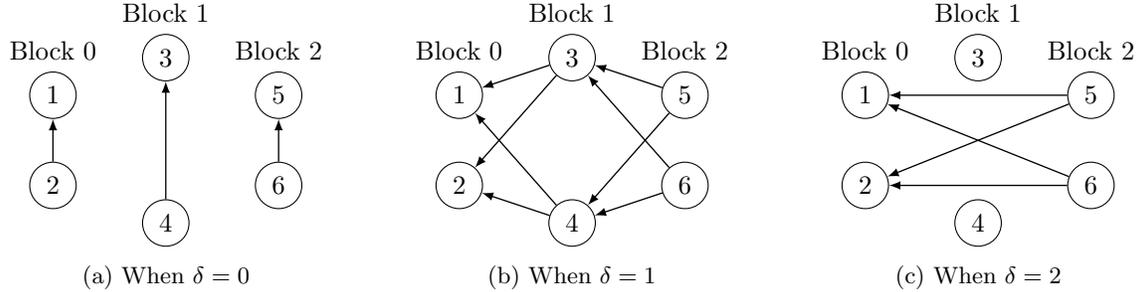

\begin{theorem}
	\label{thm:graphical}
	Let $d\geq 2$ be an even number, \cref{alg:graphical} solves the Multiple-bit CSL problem with degree-$d$ graphical games in $\frac{2n}{d}+2\log_2d + 1$ rounds of interactions with the agents in the worst case.
\end{theorem}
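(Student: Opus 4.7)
The plan is to verify three properties of \cref{alg:graphical}: (i) correctness ($\S = \S^*$ at termination), (ii) every queried game respects the degree-$d$ constraint, so it is a valid degree-$d$ graphical game via \cref{lem:degree_of_product_of_directed_prisoner_dilemma}, and (iii) the total number of queries is at most $\frac{2n}{d}+2\log_2 d + 1$. I would handle these in order.

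For correctness, I would first show that after Lines~3--6, $\delta_j$ equals $belong_j - belong_i$ where $i$ is the \emph{predecessor} of $j$ (the largest-indexed agent smaller than $j$ in $[j]_{\S^*}$), or $\delta_j = -1$ if $j$ has no predecessor. This follows from \cref{lem:product_of_directed_prisoner_dilemma}: in the $\delta$-th query, $O_j^{(\delta)} = \mathrm{True}$ iff some $i < j$ with $belong_j - belong_i = \delta$ lies in $[j]_{\S^*}$, so $\min(S_j)$ identifies the block closest to $belong_j$ from below that contains a coalition-mate of $j$ with smaller index; the predecessor must lie in precisely this block. The two while-loops (Lines~7--12 and 13--17) then perform simultaneous binary searches analogous to the proof of \cref{thm:normal_form}: \cref{lem:product_of_directed_prisoner_dilemma} tells us in each iteration whether the predecessor lies in $R_j$ or $L_j$, so $|T_j|$ halves while still containing the predecessor. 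Lines~18--21 then merge each $j$ with its unique surviving predecessor, so $\S = \S^*$.

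The hardest part will be verifying the degree bound, particularly for the second while-loop where $j$ can interact with agents outside its own block. Fix an agent $k$ and count the edges incident to $k$ in the query $\prod_{j \in N,\, \delta_j \geq 1,\, i \in R_j}\P(i,j)$. Playing the role of $i$, $k$ appears in at most $|R_k| \leq |T_k| \leq size$ games. Playing the role of $j$, $k$ appears once per $j > k$ with $k \in R_j \subseteq T_j$, which forces $belong_j - belong_k = \delta_j$, i.e., $k$'s block is the predecessor block of $j$. The key observation is that within any single coalition $S$, only the largest-indexed member of $S$ sitting in block $belong_k$ can have a coalition successor whose predecessor lies in block $belong_k$; hence each coalition contributes at most one such $j$. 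Since at most $size$ distinct coalitions intersect block $belong_k$, there are at most $size$ such $j$'s. Therefore $k$'s total degree in this query is at most $2\cdot size = d$. The queries on Line~4 and in the first while-loop admit simpler degree counts also bounded by $2\cdot size = d$, since the relevant interactions are confined to two specific blocks (one above and one below $belong_k$, or just $belong_k$ itself).

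Finally, the query count is direct. The \texttt{for} loop over $\delta$ uses $cnt = \lceil n/size \rceil \leq 2n/d + 1$ queries (using $size = d/2$). Each while-loop halves the $|T_j|$'s starting from at most $size$, so it runs at most $\lceil \log_2 size \rceil \leq \log_2 d$ iterations with one query per iteration. Summing gives at most $(2n/d + 1) + 2\log_2 d = 2n/d + 2\log_2 d + 1$, matching the theorem.
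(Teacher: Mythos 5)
Your proof is correct and follows essentially the same structure as the paper's: show that $\delta_j$ identifies the block of each agent's predecessor, that the two simultaneous binary searches isolate the predecessor itself, that every query is a valid degree-$d$ graphical game via \cref{lem:degree_of_product_of_directed_prisoner_dilemma}, and then count queries. Your coalition-counting argument for the second while-loop --- that each coalition intersecting block $belong_k$ contributes at most one $j$ with $\delta_j \geq 1$ whose predecessor lies in that block, so agent $k$ appears as the ``$i$'' in at most $size$ games --- is actually spelled out more carefully than in the paper, which asserts the corresponding $2\cdot size$ bound for the Line 16 query without that justification.
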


\begin{proof}
	We first show the correctness of the algorithm. To do this, we will show four claims: (i) after Lines 1 to 6, for each $j\in N$, if $j$ is the agent with the smallest index in $[j]_{S^*}$, $\delta_j = -1$; otherwise, let $i$ be the largest index in $[j]_{S^*}$ that is smaller than $j$, $\delta_j = belong_j - belong_i$; (ii) after Lines 7 to 17, for each $j\in N$, if $j$ is the agent with the smallest index in $[j]_{S^*}$, $T_j=\varnothing$; otherwise, $T_j$ contains only the largest index in $[j]_{S^*}$ that is smaller than $j$; (iii) after Lines 8 to 21, $\S = \S^*$; (iv) the games queried in Lines 4, 11, and 16 are degree-$d$ graphical games.

	For (i), if $j$ is the agent with the smallest index in $[j]_{S^*}$, then by \cref{lem:product_of_directed_prisoner_dilemma}, $O^{(\delta)}_j = \mathrm{False}$, for all $\delta \in \{0,1,\dots, cnt - 1\}$ on Line 4. As a result, $S_j = \varnothing$ on Line 5, and $\delta_j = -1$ on Line 6. Otherwise, let $i$ be the largest index in $[j]_{S^*}$ that is smaller than $j$. Then, by \cref{lem:product_of_directed_prisoner_dilemma}, $O^{(\delta)}_j = \mathrm{True}$ for $\delta = belong_j - belong_i$ on Line 4 since $\P(i,j)$ is included as a factor game. Moreover, $O^{(\delta)}_j = \mathrm{False}$ for all $\delta < belong_j - belong_i$ since $i$ is the largest index in $[j]_{S^*}$ that is smaller than $j$. Therefore, $belong_j - belong_i$ becomes the minimum of $S_j$ on Line 5, and thus $\delta_j = belong_j - belong_i$ on Line 6.
	
	For (ii), if $j$ is the agent with the smallest index in $[j]_{S^*}$, then $T_j = \varnothing$ on Line 7. Throughout the algorithm, $T_j$ remains $\varnothing$. Otherwise, $T_j$ contains the largest index, $i$, in $[j]_{S^*}$ that is smaller than $j$ after Line 7. Depending on whether $belong_j - belong_i = 0$ or not, either Lines 8 to 12 or Lines 13 to 17 will carry out a binary search to find this index $i$. Note that in the while loop, $T_j$ is updated to $R_j$ if $O_j = \mathrm{True}$, and to $L_j$ otherwise. Since $R_j$ contains the largest $\lceil|T_j|/2\rceil$ elements in $T_j$, by \cref{lem:product_of_directed_prisoner_dilemma}, $O_j = \mathrm{True}$ if and only if the largest index $i$ in $[j]_{S^*}$ that is smaller than $j$ is in $R_j$. Thus, the binary search will find this index $i$ and end up with $T_j = \{i\}$.

	For (iii), since every agent $j$ is either the agent with the smallest index in $[j]_{S^*}$ or merged with its predecessor (the agent with the largest index that is smaller than $j$) in $[j]_{S^*}$, each coalition is merged in $\S$ after Lines 18 to 21. Thus $\S$ becomes the same as $\S^*$.

	For (iv), consider an agent $j$. In the game queried on line 4, agent $j$ is involved in at most $2size \leq d$ factor games
    with agents $\{i\mid |belong_j - belong_i| = \delta\}$; in the game queried on line 11, agent $j$ is involved in at most $size \leq d$ factor games with agents $\{i\mid belong_i = belong_j\}$; in the game queried on line 16, agent $j$ is involved in at most $2size \leq d$ factor games with agents $\{i \mid i < j, belong_j - belong_i = \delta_j\}\cup\{i \mid i > j, \delta_i \geq 1, belong_i - belong_j = \delta_i\}$. Therefore, the games queried in Lines 4, 11, and 16 are degree-$d$ graphical games by \cref{lem:degree_of_product_of_directed_prisoner_dilemma}.

	Next, we show the complexity of the algorithm. The for loop in Lines 3 to 4 requires $cnt = \lceil\frac{n}{size}\rceil = \lceil\frac{2n}{d}\rceil$ queries, and the while loops in Lines 8 to 12 and Lines 13 to 17 require at most $\lceil\log_2 size\rceil = \lceil\log_2 d\rceil-1$ queries each. Therefore, the total number of queries is at most $\lceil\frac{2n}{d}\rceil + 2\lceil\log_2 d\rceil - 2 \leq \frac{2n}{d} + 2\log_2 d + 1$. This completes the proof.
\end{proof}

\section{Omitted Proofs}
\label{app:missing_parts}

\subsection{Proof of Theorem \ref{thm:auction_linear}}
\label{appsub:proof_of_thm_auction_linear}

\auctionlinear*

\begin{proof}
	We first show the correctness of the algorithm. Consider the if statement on Line 8, if there exists $j$ such that $O_j = \mathrm{True}$, then by \cref{lem:auction_gadget}, exactly one agent $j$ in $T$ has $O_j = \mathrm{True}$, and agents $i$ and $j$ are in the same coalition under $\S^*$. The algorithm then merges $[i]_\S$ and $[j]_\S$ in $\S$, and removes $j$ from $T$. Otherwise, by \cref{lem:auction_gadget}, no agent in $T$ is in $[i]_{\S^*}$. The algorithm then removes $i$ from $T$. Therefore, the inner while loop on Lines 5 to 12 finalizes agent $i$'s coalition in $\S$ correctly, and removes the whole coalition from $T$. The correctness of the algorithm then follows.
	
	For the complexity, the algorithm starts with $T = N$, and after each query on Line 7, it removes one agent from $T$. Since the algorithm terminates when $|T| = 1$, it uses $n - 1$ queries in total.
\end{proof}

\end{document}